\theoremstyle{plain}
\newtheorem{lemma}{Lemma}
\newtheorem{theorem}{Theorem}
\newtheorem{definition}{Definition}
\newtheorem{proposition}{Preposition}
\theoremstyle{definition}
\newcommand{\argmax}{\operatornamewithlimits{argmax}}
\begin{document}
%
\title{Pricing Mobile Data Offloading: A Distributed Market Framework}

\author{~~~~~~~~Kehao~Wang,~~ Francis C. M. Lau, \textit{Senior Member, IEEE},~~ Lin Chen, \textit{Member, IEEE}, ~~~~~~~~~~~~~~~~~~~~~~and Robert Schober, \textit{Fellow, IEEE}
\IEEEcompsocitemizethanks{\IEEEcompsocthanksitem

K.~Wang and Francis C. M. Lau are with the Department of Electronic and Information Engineering, the Hong Kong Polytechnic University, Kowloon, Hong Kong. (email: www.kehao@gmail.com, encmlau@polyu.edu.hk).

L.~Chen is with the Laboratoire de Recherche en Informatique (LRI), Department of Computer Science, the University of Paris-Sud, 91405 Orsay, France (e-mail:lin.chen@lri.fr).

Robert Schober is with the Institute for Digital Communications (IDC), Friedrich-Alexander-University Erlangen-Nurnberg (FAU), Germany (email:schober@lnt.de).
}}

\IEEEcompsoctitleabstractindextext{%
\begin{abstract}

Mobile data offloading is an emerging technology to avoid congestion in cellular networks and improve the level of user satisfaction. In this paper, we develop a distributed market framework to price the offloading service, and conduct a detailed analysis of the incentives for offloading service providers and conflicts arising from the interactions of different participators. Specifically, we formulate a multi-leader multi-follower Stackelberg game (MLMF-SG) to model the interactions between the offloading service providers and the offloading service consumers in the considered market framework, and investigate the cases where the offloading capacity of APs is unlimited and limited, respectively. For the case without capacity limit, we decompose the followers' game of the MLMF-SG (FG-MLMF-SG) into a number of simple follower games (FGs), and prove the existence and uniqueness of the equilibrium of the FGs from which the existence and uniqueness of the FG-MLMF-SG also follows. For the leaders' game of the MLMF-SG, we also prove the existence and uniqueness of the equilibrium. For the case with capacity limit, by considering a symmetric strategy profile, we establish the existence and uniqueness of the equilibrium of the corresponding MLMF-SG, and present a distributed algorithm that allows the leaders to achieve the equilibrium. Finally, extensive numerical experiments demonstrate that the Stackelberg equilibrium is very close to the corresponding social optimum for both considered cases.
\end{abstract}

\begin{IEEEkeywords}
Mobile Traffic Offloading, Stackelberg Game, WiFi Offloading, Nash Equilibrium
\end{IEEEkeywords}}

\maketitle

\IEEEdisplaynotcompsoctitleabstractindextext

%
\IEEEpeerreviewmaketitle

\section{Introduction}
\label{section:introduction}

\subsection{Background}
The data traffic in cellular networks has seen a tremendous growth over the past few years due to the explosion of mobile devices, e.g. smart phones, tablets, laptops etc. The increasing data traffic in cellular networks suggests that traffic from cellular networks should be offloaded so as to alleviate traffic congestion and improve user satisfaction. Thus, mobile data offloading emerged as a promising approach to utilize certain complementary transmission technologies to deliver data traffic originally transmitted over cellular networks to the users. Recently, a large number of studies have investigated the potential benefits of mobile data offloading and various innovative schemes have been proposed to better manage data traffic including WiFi~\cite{Lee2010coNext,Kyunghan2013ToN,Xuejun2013TMC,Ristanovic2011Mass,Bennis2013CM}, femtocells~\cite{Andrews2012JSAC,Lingjie2013TMC,HuaiLei2013TMC,Saker2012JSAC,Seyoung2012JSAC}, and opportunistic offloading~\cite{Yongli2013TMC,bohan2013TMC}. In fact, these studies have shown that data offloading is a cost-effective and energy-prudent approach to resolve network congestion and improve network capacity.

\subsection{Motivation}

However, the merit of mobile data offloading does not always guarantee that offloading is adopted by the offloading service providers (OSPs) and offloading service consumers (OSCs), i.e., mobile data flows, in practice. One of the most important reasons for not adopting mobile data offloading is the lack of economic incentives, i.e., OSPs may be reluctant to make their resources available for offloading data traffic without permission or appropriate economic reimbursement since offloading data traffic will consume their limited wireless resources and reduce broadband connection capacity. Thus, it is of significant importance to analyze the economic implications of mobile data offloading from the perspective of both OSPs and OSCs. 
For ease of presentation, in this paper, we focus on WiFi offloading in which OSPs and OSCs represent Access Points (APs) and cellular data flows, respectively.

From an economics point of view, there are many works considering the interaction between APs and cellular data flows. For example,~\cite{Joohyun2013info} studied delayed WiFi offloading by modeling the interactions between APs and cellular data flows as a two-stage sequential Stackelberg game with one leader and multiple followers.
In~\cite{Lingao2013info}, the authors investigated the economics of mobile data offloading through WiFi or femtocells, and utilized a multi-leader multi-follower Stackelberg game (MLMF-SG) to achieve the subgame perfect equilibrium (SPE), and further compared the SPE with the corresponding outcomes in a perfect competition market and in a monopoly market without price participation, respectively.
In~\cite{Iosifidis2013wiopt}, the authors considered the scenario where each of the mobile network operators (MNOs) can employ multiple APs to offload its data traffic and each AP can concurrently serve traffic from different MNOs. The proposed market scheme incurred minimum communication overhead and created non-negative revenue for the market broker without requiring a priori information about MNOs and APs.

Motivated by~\cite{Joohyun2013info,Lingao2013info,Iosifidis2013wiopt}, we consider a typical offloading scenario where a number of cellular data flows offload their data traffic to a number of APs in their vicinity, e.g., hotspots near base stations. In particular, we propose a pricing framework based on the concept of `paying for offloading' to ensure efficient use of the offloading APs. Under this framework each cellular data flow corresponding to a mobile source-destination pair offers a payment to incentivize APs to participate in offloading, and then the payment is shared in proportion to the amount of data offloaded to each AP. Hence, the utility of an AP is its share of received payment minus its own offloading cost. For a cellular data flow, its utility is defined as a generic concave function of the sum of the utilities from offloading on the APs minus the cost paid to these offloading APs. We model the interaction of the APs and the cellular data flows as an MLMF-SG, where the APs are the followers who respond to the payment offered by the cellular data flows (i.e., each AP offloads a part of the data of some flows such that its utility is maximized, given the payment offered by the flows and the actions of its competing peers); and the cellular data flows are the leaders who set the payment to maximize their own utility in anticipation of the Nash equilibrium (NE) response of the followers. Notwithstanding our interest in the mobile data offloading context, the considered model is generic enough to be applied any other scenario where a set of `jobs' compete for the services of a pool of `workers', such that the jobs set their payment rates, workers are free to choose the job they will attempt, and payment from each job is eventually shared according to certain allocation rules among all the workers that serve the job.

Unlike most pricing methods in the existing literature that involve only one type of selfish players~\cite{Joohyun2013info} or two types of selfish players without competition between them~\cite{Lingao2013info,Iosifidis2013wiopt}, our framework features two types of players, each of which competes not only with its peers but also with the players of the other type. This property distinguishes our work from the scenario considered in~\cite{Lingao2013info,Iosifidis2013wiopt}, where only players of the same type can compete with each other although there exist two types of selfish players. This difference cause the utility functions of players in this paper to be completely different from those in~\cite{Lingao2013info,Iosifidis2013wiopt} as far as concavity is concerned. Concretely, with the strategy profile in~\cite{Lingao2013info,Iosifidis2013wiopt}, the utility functions of both followers and leaders are concave, which ensures that there exists an equilibrium in the followers' game and the leaders' game, respectively.
However, in our case, the payment from a flow is shared proportionally among all APs according to the amount of data offloaded to each AP. As a consequence, an AP's utility depends not only on its own strategy but also on the strategies of its peers, which leads to complex interactions among the APs. Accordingly, the sharing of payment causes the utility functions to be non-concave, which necessitates a completely new and original study of the game's equilibrium.


\subsection{Contributions}
The main contributions of this paper can be summarized as follows:
\begin{itemize}
  \item We develop a distributed market pricing framework for mobile data flows to price the offloading service.
  \item We formulate a Stackelberg game to model the interactions between offloading service providers and offloading service consumers under the market framework, and investigate the cases where the offloading capacity of APs is limited and unlimited, respectively. For both cases, we establish the existence and uniqueness of the equilibrium of the proposed Stackelberg game, obtain the Stackelberg equilibrium in closed form when the offloading capacity of the APs is not limited, and further propose a distributed pricing algorithm to ensure that the game converges to an equilibrium when the offloading capacity of the APs is limited.
  \item We conduct a large number of simulations to verify our theoretical analysis on the proposed Stackelberg game for the two considered cases. As a noteworthy property of the developed framework, simulation results demonstrate that the Stackelberg equilibrium is very close to the social optimum.

\end{itemize}

\subsection{Related Work}
\label{sec:rw}

To provide better service in cellular networks, a body of literature has proposed to exploit various kinds of technologies to offload data traffic. These works adopt three different approaches for offloading.

The first approach is opportunistic offloading which utilizes opportunistic communication to offload cellular traffic. For example, the authors of~\cite{Yongli2013TMC} considered the heterogeneities of mobile data and mobile users in realistic disruption tolerant networks, and established a mathematical framework to study the problem of multiple-type mobile data offloading. 
In~\cite{bohan2013TMC}, opportunistic communication was exploited to facilitate information dissemination in the emerging mobile social networks and to reduce the amount of mobile data traffic.

The second approach is femtocell offloading which has emerged as another primary option for macrocellular data offloading. In~\cite{Andrews2012JSAC}, the potential benefits and costs of deploying femtocells were surveyed.
In~\cite{Lingjie2013TMC}, the authors investigated the network operator's profit gain from offering dual services through both macrocells and femtocells.
The authors of~\cite{HuaiLei2013TMC} considered the tradeoff between reducing the paging cost in mobility management and registration signaling overhead, and proposed a delay registration algorithm that postpones the registration and reduces signaling overhead while sustaining the traffic offloading capability of the femtocell.
In~\cite{Saker2012JSAC}, optimal sleep/wake up schemes were studied for the base stations of network-operated femtocells to offload part of its traffic to minimize the energy consumption of the overall heterogeneous network while preserving quality of service (QoS). 
In~\cite{Seyoung2012JSAC}, the authors studied the economic aspects of femtocell services for the case of a monopoly market.
The authors of~\cite{Tehrani2012icc} proposed a dynamic pricing scheme based on market equilibrium and non-cooperative game such that the mobile service providers can gain more revenue than with a fixed pricing scheme.
In~\cite{Jiming2013CC}, the authors focused on the inter-femtocell interference in three-dimension scenarios, and classified multiple femtocells into a number of groups according to the amount of interference caused to others.

In this paper, we focus on another approach for mobile data offloading which exploits the freely available WiFi networks, and is referred to as WiFi offloading.
In~\cite{Kyunghan2013ToN}, the authors first presented a quantitative study for the performance of 3G mobile data offloading through WiFi networks, and then proposed a distribution model-based simulator to investigate the average performance of offloading for a given WiFi deployment condition.
In~\cite{Xuejun2013TMC}, the authors studied the tradeoff between the amount of traffic being offloaded and the user satisfaction, and provided an incentive framework based on reverse auction to motivate users to leverage their delay tolerance for cellular traffic offloading.
This performance gain can be improved by delaying transmission~\cite{Xuejun2013TMC} and predicting WiFi availability~\cite{Xuejun2013TMC, Ristanovic2011Mass}. A cost-effective scheme integrating both WiFi and cellular radio access technologies was proposed to efficiently address peak wireless data traffic and heterogeneous QoS requirements~\cite{Bennis2013CM}.
A subscribe-and-send architecture and an opportunistic forwarding protocol were presented in~\cite{Xiaofeng2013CC} such that the users having subscribed contents from the Content Service Provider (CSP) can obtain these contents from other users who can access these contents through WiFi opportunistic peer-to-peer communications rather than directly downloading the subscribed contents from the CSP.
In~\cite{Sou2013TVT}, the authors proposed an enhanced WiFi offloading model to bring mobile IP integration into the core network with Policy and Charging Control (PCC), and developed a comprehensive analytical model to quantify the performance of data offloading in terms of the amount of 3G resources saved by offloading and the deadline assurance for measuring the quality of user experience with PCC support.
In~\cite{Singh2013TWC}, the authors focused on the effect of inter-radio access technology (RAT) offloading on the overall system performance, and developed a general and tractable model that consisted of $M$ different RATs, each deploying up to $K$ different tiers of access points with different parameters. 
In contrast to these existing works, this paper is the first to investigate the economic behavior of WiFi offloading for two types of selfish players, which compete not only with the players of the same type but also with players of the other type. This sets our work apart from the existing literature in this field.

\subsection{Organization of Paper}

The remainder of this paper is organized as follows: 
The considered problem is formulated in Section~\ref{section:Problem_formulation}. In Section~\ref{sec:without_CB}, we analyzes the Stackelberg equilibrium without offloading capacity limit, while in Section~\ref{sec:with_CB} we analyze the Stackelberg equilibrium with offloading capacity limit. Simulation results are provided in Section~\ref{sec:sim}. Finally, the paper is concluded in Section~\ref{section:conclusion}.

\section{Problem Formulation}
\label{section:Problem_formulation}

In this section, we first provide the system model of WiFi offloading, and then introduce the pricing market framework. Subsequently, we formulate the problem to a Stackelberg game.

\subsection{System Model}

We consider a set $\mathcal{F}$ of mobile data flows (or data traffics) in a cellular network where each flow $f$ transmits a number of data packets from the source $S_f$ to the destination $D_f$. 
A set $\mathcal{R}$ of potential offloading APs (with $|\mathcal{R}|=R\ge2$) in the vicinity of the flows, may help flow $f$ to offload its data packets to the destination via another transmission network, e.g. WiFi. In return, the APs may obtain a certain reimbursement from flow $f$. 
The APs are assumed to be WiFis operating on different carriers, and accordingly the APs' signals do not mutually interfere with each other. Assume that time is slotted, and there is a network-wide slot synchronization. We focus on how the packets of flow $f$ should be priced such that the APs have an incentive to offload data packets of flow $f$.


\subsection{Pricing Framework}

For a selfish AP $i$ ($i\in \mathcal{R}$), to incentivize offloading, it must receive some reimbursement that is greater than its offloading cost. For this purpose, each flow $f$ offers a payment of $C^f$ to incentivize APs to offload data traffic, where $C^f$ is determined by the flow itself, i.e., $C^f$ is the strategy of flow $f$. We denote by $r^f_i$ the amount of data offloaded by AP $i$ for flow $f$. Hence, the utility of flow $f\in \mathcal{F}$ is defined as the net payoff that $f$ gets per slot:
\begin{equation}
\label{ob_flow}
    U_f\triangleq u_f\Big(\sum_{i\in \mathcal{R}} \log (1+r^f_i)\Big)-C^f,
\end{equation}
where the $\log (1+r^f_i)$ term\footnote{We adopt $\log (1+r^f_i)$ only for presentation purpose. This term can be replaced by other types of utility functions as long as they reflect the diminishing utility of flow $f$ in terms of $r^f_i$} reflects the diminishing utility of flow $f$ from $r^f_i$. Function $u_f(\cdot)$ represents the total utility from the assistance of all APs. We assume $u_f(w)$ is continuously differentiable, strictly increasing, and weakly concave in $w$, i.e., $u^{'}_f(w)> 0$ and $u^{''}_f(w)\leq0$, with $u_f(0)=0$.

Next, we consider the utility of the APs. For flow $f~(f\in \mathcal{F})$, the payment of $C^f$ is shared in accordance with the level of cooperation, i.e., the amount of data offloaded by the APs that offload packets of flow $f$. The vector $\mathbf{r}_i=\{ r^f_i,f\in \mathcal{F} \}$ is the strategy of AP $i$ where $\sum_{f\in \mathcal{F}} r^f_i \leq B$ reflects the limited offloading capacity $B$ of AP $i$. We denote the cost (e.g. in terms of energy) for AP $i$ to offload a packet of flow $f$ by $e^f_i$. Thus, the expected payoff per slot for AP $i$ is
\begin{equation}
\label{ob_node}
    V_i\triangleq \sum_{f\in \mathcal{F}}V^f_i= \sum_{f\in \mathcal{F}} \Big[ C^f \frac{r^f_i}{\sum_{j\in \mathcal{R}}r^f_j} - e^f_i r^f_i \Big],
\end{equation}
where $V^f_i \triangleq C^f \frac{r^f_i}{\sum_{j\in \mathcal{R}}r^f_j} - e^f_i r^f_i$.

The payoff function of AP $i$ has the following property.
\begin{lemma}
\label{lemma:concave}
$V_i$ is not a concave function in $r^f_j$ ($j\in \mathcal{R}, j\neq i$).
\end{lemma}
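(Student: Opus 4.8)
The plan is to prove the (slightly stronger) statement that, holding every other variable fixed, $V_i$ is \emph{strictly convex} — and therefore not concave — as a function of $r^f_j$ for any fixed $j\in\mathcal{R}$ with $j\neq i$, at any nondegenerate operating point (i.e.\ whenever $C^f>0$ and $r^f_i>0$). First I would isolate the dependence on $r^f_j$: in the sum defining $V_i$ in~\eqref{ob_node}, every summand $V^g_i$ with $g\neq f$ is independent of $r^f_j$, and within $V^f_i=C^f r^f_i/\sum_{k\in\mathcal{R}}r^f_k-e^f_i r^f_i$ the offloading-cost term $e^f_i r^f_i$ depends only on $r^f_i$, so the sole contribution is the shared-payment term $C^f r^f_i/\sum_{k\in\mathcal{R}}r^f_k$. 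Hence it suffices to study the scalar map $x\mapsto C^f\,a/(b+x)$ on its (convex) feasible interval, where $a:=r^f_i$ and $b:=\sum_{k\in\mathcal{R},\,k\neq j}r^f_k$ are positive constants and $x:=r^f_j$.

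Next I would differentiate twice: $\frac{d}{dx}\big(a/(b+x)\big)=-a/(b+x)^2$ and $\frac{d^2}{dx^2}\big(a/(b+x)\big)=2a/(b+x)^3$, which gives
\begin{equation*}
\frac{\partial^2 V_i}{\partial (r^f_j)^2}\;=\;\frac{2\,C^f\,r^f_i}{\big(\sum_{k\in\mathcal{R}}r^f_k\big)^3}\;>\;0 .
\end{equation*}
A strictly positive second derivative on a convex interval makes the function strictly convex there, and a non-constant strictly convex function fails to be concave; this proves the lemma. If one prefers to avoid differentiation, the same conclusion follows from a one-shot counterexample: take two APs and a single flow with $C^f=1$, $e^f_i=0$, $r^f_i=1$, and verify directly that the value of $V_i$ at the midpoint of two choices of $r^f_j$ strictly exceeds the average of the two endpoint values, contradicting the definition of concavity.

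There is no genuine obstacle here; the only point needing care is the degenerate cases. If $r^f_i=0$ or $C^f=0$ the relevant term is identically constant in $r^f_j$, hence vacuously both concave and convex, so (as is standard) the lemma should be read as asserting the failure of concavity at a generic, nondegenerate operating point — exactly the regime that matters for the equilibrium analysis, where an active AP offloads a positive amount for a flow offering a positive payment. I would therefore state this assumption explicitly at the outset of the proof. It is worth noting, in the remark accompanying the lemma, the contrast with $V^f_i$ as a function of an AP's \emph{own} variable $r^f_i$, which an analogous one-line computation shows to be concave; it is precisely the convexity of $V_i$ in the \emph{competitors'} variables that breaks the usual concave-game machinery and necessitates the new equilibrium analysis developed in the remainder of the paper.
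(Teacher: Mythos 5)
Your proof is correct and follows essentially the same route as the paper's: compute $\frac{\partial^2 V_i}{\partial (r^f_j)^2}=\frac{2C^f r^f_i}{(\sum_{k\in\mathcal{R}}r^f_k)^3}$ and conclude convexity, hence failure of concavity, in the competitor's variable. In fact your version is slightly more careful than the paper's one-line claim of ``$\ge 0$'' (which alone would not preclude concavity), since you establish strict positivity at nondegenerate points and flag the degenerate cases explicitly.
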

\begin{proof}
It is easy to show $\frac{ \partial^2 V_i}{\partial^2 r^f_j}\ge 0$, which means that $V_i$ is not a concave function in $r^f_j$ ($j\in \mathcal{R}, j\neq i$).
\end{proof}

\subsection{Stackelberg Game}
We model the offloading problem with pricing as a Stackelberg game which includes two roles (leader and follower) and two stages. In the first stage, each flow $f$ (as a leader) announces its reimbursement $C^f$, and the reimbursement from all flows are collected in a reimbursement vector $\mathbf{C}=(C^1,C^2,\cdots, C^{|\mathcal{F}|})$. In the second stage, each offloading AP $i$ (as a follower) in $\mathcal{R}$ choose its offloading size $\mathbf{r}_i=(r^1_i,r^2_i,\cdots,r^{|\mathcal{F}|}_i)$ for different flows to maximize its own utility. Hence, the flows are the leaders and the APs are the followers in this Stackelberg game. For convenience, let $\mathbf{r}=(\mathbf{r}_1,\mathbf{r}_2,\cdots,\mathbf{r}_{|\mathcal{R}|})$ denote the strategy profile of all APs where $\mathbf{r}_i$ is the strategy profile of AP $i$. Let $\mathbf{r}_{-i}$ denote the strategy profile excluding $\mathbf{r}_i$ and $\mathbf{r}^f_{-i}$ be the profile excluding AP $i$ given $f$. Then, $\mathbf{r}=(\mathbf{r}_i, \mathbf{r}_{-i})$ and $\mathbf{r}_i=(r^f_i,\mathbf{r}^f_{-i})$.

\subsubsection{Followers' Game}
Given $\mathbf{r}_{-i}$, each follower (AP $i$) chooses its strategy $\mathbf{r}_i$ to maximize its utility in response to the leaders' strategies $\mathbf{C}\triangleq(C^f, \mathbf{C}^{-f})=(C^1,C^2,\cdots, C^{|\mathcal{F}|})$. Thus, the objective of AP $i$ is to solve the following optimization problem:
 \begin{align}
 \label{game_f}
    & \widetilde{\mathbf{r}}_i(\mathbf{C})=\argmax_{\mathbf{r}_i} V_i(\mathbf{r}_i, \mathbf{r}_{-i},\mathbf{C}) \\
 \label{game_f_con1}
   s.t. & \sum_{f\in \mathcal{F}} r^f_i \leq B, ~~ \forall i\in \mathcal{R} \\
 \label{game_f_con2}
        & r^f_i\geq 0,~~  \forall i\in \mathcal{R},~ \forall f\in \mathcal{F}.
 \end{align}
Then, we have $ \widetilde{\mathbf{r}}(\mathbf{C})=\Big( \widetilde{\mathbf{r}}_1(\mathbf{C}),\cdots, \widetilde{\mathbf{r}}_{|\mathcal{R}|}(\mathbf{C})\Big)$.
Note that the followers' game itself can be considered as a non-cooperative game~\cite{Rosen1965ECO}.

\subsubsection{Leaders' game}
Given $\mathbf{C}^{-f}$, each leader (flow $f$) chooses its strategy $C^f$ to maximize its utility function $U_f(\cdot)$ anticipating that the followers will eventually respond with a collection of strategies that constitute an NE according to~\eqref{game_f}. Thus, the leaders' problem is
 \begin{align}
 \label{game_l}
    \widetilde{C}^f=\argmax_{C^f} U_f(C^f, \mathbf{C}^{-f},  \widetilde{\mathbf{r}}(C^f, \mathbf{C}^{-f})).
 \end{align}

The solution of the Stackelberg game is characterized by a Stackelberg Nash Equilibrium (SNE), that is a strategy profile from which no player has incentive to deviate unilaterally.

In the following sections, we will analyze the SNE for two different cases. In the first case, the capacity of the APs is not limited, which corresponds to omitting constraint~\eqref{game_f_con1}. In the second case, the capacity of the APs is limited, which corresponds to keeping constraint~\eqref{game_f_con1}.

\section{Stackelberg Game Equilibrium Analysis Without Capacity Bound}
\label{sec:without_CB}

In this section, we investigate the existence and uniqueness of an SNE for the considered Stackelberg game if the capacity of the APs is not limited (corresponding to omitting the constraint~\eqref{game_f_con1}). Specifically, we first show that the followers' game of the multi-leader multi-follower Stackelberg game (FG-MLMF-SG) can be decomposed into a series of followers' games. Then, we show the existence and uniqueness of an NE for the followers' game by analyzing its best response strategy, and prove the existence of a unique NE of the leaders' game by utilizing the structural properties of its objective function.

\subsection{Followers' Game}

Since the capacity of the APs is much larger than that of mobile devices, it is reasonable to assume that there is no offloading capacity limit for the APs. Under this assumption, the following proposition decomposes the complicated followers' game defined in Section~\ref{section:Problem_formulation} into a number of simpler games.

\begin{proposition}
\label{pr:sg_dec}
If the capacity of the APs is not limited, FG-MLMF-SG can be decomposed into $|\mathcal{F}|$ followers' games $\Big(FG(1), \cdots, FG(|\mathcal{F}|)\Big)$.
\end{proposition}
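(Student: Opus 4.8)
The plan is to exploit the separability of both the objective $V_i$ and the constraint set once the coupling constraint~\eqref{game_f_con1} is dropped. First I would observe that, in the absence of~\eqref{game_f_con1}, AP $i$'s feasible set is simply the product set $\{\mathbf{r}_i : r^f_i\ge 0,\ \forall f\in\mathcal{F}\} = \prod_{f\in\mathcal{F}}\mathbb{R}_{\ge 0}$, with no interaction between the components $r^f_i$ for different $f$. Second, I would rewrite the objective using~\eqref{ob_node}: $V_i(\mathbf{r}_i,\mathbf{r}_{-i},\mathbf{C}) = \sum_{f\in\mathcal{F}} V^f_i = \sum_{f\in\mathcal{F}}\big[C^f \frac{r^f_i}{\sum_{j\in\mathcal{R}}r^f_j} - e^f_i r^f_i\big]$, and point out that the $f$-th summand $V^f_i$ depends only on the variables $\{r^f_j : j\in\mathcal{R}\}$ associated with flow $f$ and on the scalar $C^f$ — it involves no cross-flow terms. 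Hence maximizing $V_i$ over $\mathbf{r}_i$ subject only to nonnegativity decouples into $|\mathcal{F}|$ independent scalar maximizations, one per flow: $\widetilde{r}^f_i = \argmax_{r^f_i\ge 0} V^f_i(r^f_i,\mathbf{r}^f_{-i},C^f)$.

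Next I would lift this per-player decoupling to the game level. Define, for each $f\in\mathcal{F}$, the game $FG(f)$ whose players are the APs in $\mathcal{R}$, whose strategy for AP $i$ is the scalar $r^f_i\ge 0$, and whose payoff for AP $i$ is $V^f_i(r^f_i,\mathbf{r}^f_{-i},C^f)$. I would then argue that a strategy profile $\mathbf{r} = (\mathbf{r}_1,\dots,\mathbf{r}_{|\mathcal{R}|})$ is an NE of FG-MLMF-SG (with~\eqref{game_f_con1} omitted) if and only if, for every $f\in\mathcal{F}$, the profile $(r^f_1,\dots,r^f_{|\mathcal{R}|})$ is an NE of $FG(f)$. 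The forward direction: at an NE of the big game, AP $i$ has no profitable unilateral deviation in $\mathbf{r}_i$; restricting to deviations that change only the $f$-component shows $(r^f_i)_i$ is a mutual best response in $FG(f)$. The reverse direction: if each $(r^f_i)_i$ is an NE of $FG(f)$, then since $V_i = \sum_f V^f_i$ and the feasible set is a product, AP $i$'s best response in the big game is obtained by concatenating its best responses in the individual $FG(f)$'s, so no AP can improve $V_i$ by any joint deviation. This equivalence is precisely the statement that FG-MLMF-SG decomposes into $\big(FG(1),\dots,FG(|\mathcal{F}|)\big)$.

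The main obstacle — really the only subtle point — is making the ``if and only if'' for equilibria airtight, in particular the reverse direction, where one must rule out joint deviations that change several flow-components of $\mathbf{r}_i$ simultaneously; this is where the product structure of the feasible set (a genuine consequence of dropping~\eqref{game_f_con1}) and the additive separability of $V_i$ across $f$ are both essential, and I would state them explicitly as the two load-bearing facts. Note that by Lemma~\ref{lemma:concave} the payoff $V^f_i$ is not concave in the opponents' variables, but this does not affect the decomposition itself — it only matters later, when the existence and uniqueness of an NE of each $FG(f)$ is established by a direct best-response analysis rather than by invoking standard concave-game theorems. I would close by remarking that the decomposition reduces the analysis of the followers' game to the single-flow games $FG(f)$, which is exactly what the subsequent sections require.
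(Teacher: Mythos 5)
Your proposal is correct and follows essentially the same route as the paper, which simply asserts the decomposition from the additive separability of $V_i=\sum_f V^f_i$ and the removal of constraint~\eqref{game_f_con1}. In fact your write-up is more careful than the paper's one-sentence proof: you explicitly identify the two load-bearing facts (product feasible set and cross-flow separability of the payoff) and verify both directions of the equilibrium equivalence, including ruling out joint multi-flow deviations.
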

\begin{proof}
%
If the capacity of the APs is not limited, according to~\eqref{ob_node} and~\eqref{game_f}, FG-MLMF-SG, denoted by $\widetilde{\mathbf{r}}_i(\mathbf{C})=\argmax_{\mathbf{r}_i} V_i(\mathbf{r}_i, \mathbf{r}_{-i},\mathbf{C})$, can be decomposed into $|\mathcal{F}|$ followers games $\Big(FG(1), \cdots, FG(|\mathcal{F}|)\Big)$, where $FG(f), ~f \in \mathcal{F}$ corresponds to the optimization problem $\widetilde{r}^f_i(C^f)=\argmax_{r^f_i} V^f_i(r^f_i, \mathbf{r}^f_{-i},C^f)$.
\end{proof}

\begin{definition}
Given $C^f$ and $\mathbf{r}^f_{-i}$, a strategy is the best response strategy of AP $i$ for $FG(f)$, denoted by $\Gamma^f_i(\mathbf{r}^f_{-i})$, if it maximizes $V^f_i(r^f_i, \mathbf{r}^f_{-i})$ over $r^f_i\geq0$.
\end{definition}


From $ \frac{\partial V^f_i}{\partial r^f_i}=0$, we obtain $
    \widetilde{r}^f_i
    =\sqrt{\frac{C^f \sum_{j\in \mathcal{R}\setminus\{i\}} \widetilde{r}^f_j}{e^f_i}}-\sum_{j\in \mathcal{R}\setminus\{i\}} \widetilde{r}^f_j
$.
Therefore, the best response $\Gamma^f_i(\mathbf{r}^f_{-i})$ of follower $i$ for flow $f$ is
\begin{equation}
\label{node_br}
    \Gamma^f_i(\mathbf{r}^f_{-i})=
    \begin{cases}
        \sqrt{\frac{C^f \sum_{j\in \mathcal{R}\setminus\{i\}} \widetilde{r}^f_j}{e^f_i}}-\sum_{j\in \mathcal{R}\setminus\{i\}} \widetilde{r}^f_j, & \text{ if  } e^f_i\sum_{j\in \mathcal{R}\setminus\{i\}} \widetilde{r}^f_j \leq C^f\\
        0, & \text{ otherwise}.
    \end{cases}
\end{equation}
The best responses of follower $i$ for $\Big(FG(1), \cdots, FG(|\mathcal{F}|)\Big)$ are collected in the best response vector $\Gamma_i(\mathbf{r}_{-i})=\Big(\Gamma^1_i(\mathbf{r}^1_{-i}),\cdots,\Gamma^{|\mathcal{F}|}_i(\mathbf{r}^{|\mathcal{F}|}_{-i})\Big)$.
%

The following theorem states that the best response strategy leads to an NE of the FG-MLMF-SG.
\begin{theorem}
\label{th:exist_f_no_lim}
The strategy profile  $\mathbf{\widetilde{r}}=(\mathbf{\widetilde{r}}^1,\mathbf{\widetilde{r}}^2,\cdots,\mathbf{\widetilde{r}}^{|\mathcal{F}|})$ is an NE of the FG-MLMF-SG, where $\mathbf{\widetilde{r}}^f=(\widetilde{r}^f_1,\widetilde{r}^f_2,\cdots,\widetilde{r}^f_{|\mathcal{R}|})$ is an NE of $FG(f)$, where
\begin{enumerate}
\item the optimal sets of offloading APs, denoted by $\mathcal{S}=(\mathcal{S}_1,\mathcal{S}_2,\cdots,\mathcal{S}_F)$, are computed by Algorithm~\ref{ag:S_set};
  \item $\widetilde{r}^f_i=\frac{(|\mathcal{S}_f|-1)C^f}{\sum_{j\in \mathcal{S}_f}e^f_j}\Big(1-\frac{(|\mathcal{S}_f|-1)e^f_i}{\sum_{j\in \mathcal{S}_f}e^f_j}\Big)$ if $i\in \mathcal{S}_f$; $\widetilde{r}^f_i=0$ otherwise.
\end{enumerate}
\begin{algorithm}
\label{ag:S_set}
\caption{Computation of the optimal sets of offloading APs}
\begin{algorithmic}[1]
\FOR {$f \in \mathcal{F}$}
    \STATE Sort APs according to their offloading costs: $e^f_{\sigma_1} \leq e^f_{\sigma_2} \leq \cdots \leq e^f_{\sigma_R}$;
    \STATE $\mathcal{S}_f = \{\sigma_1,\sigma_2\}, i  = 3$;
    \WHILE {$i \leq R$ and $e^f_{\sigma_i} < \frac{\sum_{j\in \mathcal{S}_f}e^f_{j}}{|\mathcal{S}_f|-1}$}
        \STATE $\mathcal{S}_f=\mathcal{S}_f \cup \{\sigma_i\}, i=i+1$;
    \ENDWHILE
\ENDFOR
\RETURN {$\mathcal{S}=(\mathcal{S}_1,\mathcal{S}_2,\cdots,\mathcal{S}_F)$}.
\end{algorithmic}
\end{algorithm}

\end{theorem}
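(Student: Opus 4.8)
The plan is to lean on Proposition~\ref{pr:sg_dec}: once the capacity constraint is dropped, the FG-MLMF-SG splits into the $|\mathcal{F}|$ decoupled games $FG(f)$, and a strategy profile is an NE of the FG-MLMF-SG if and only if, for every $f$, its slice $\mathbf{\widetilde{r}}^f$ is an NE of $FG(f)$. Hence it suffices to fix an arbitrary $f$, suppress the superscript $f$, and show that the profile $\mathbf{\widetilde{r}}^f$ described in the theorem is an NE of $FG(f)$. For this I would first note that, although $V_i$ is not concave in the competitors' variables (Lemma~\ref{lemma:concave}), the partial map $r_i\mapsto V_i(r_i,\mathbf{r}_{-i})$ is strictly concave on $[0,\infty)$ whenever $T_{-i}:=\sum_{j\neq i}r_j>0$ (a one-line second-derivative check, since $V_i''=-2C\,T_{-i}/(r_i+T_{-i})^3$), so the best response of $i$ is exactly the single-valued map $\Gamma_i$ of~\eqref{node_br}. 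It therefore remains to verify the fixed-point equations $\widetilde{r}_i=\Gamma_i(\mathbf{\widetilde{r}}_{-i})$ for every $i\in\mathcal{R}$.

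Next I would solve these fixed-point equations under the ansatz that the support is some set $\mathcal{S}$ (the APs with $\widetilde{r}_i>0$). For $i\in\mathcal{S}$ the interior branch of~\eqref{node_br} reads $\widetilde{r}_i+T_{-i}=\sqrt{C\,T_{-i}/e_i}$; writing $T:=\sum_{j\in\mathcal{S}}\widetilde{r}_j$ so that $T_{-i}=T-\widetilde{r}_i$, this becomes $T=\sqrt{C(T-\widetilde{r}_i)/e_i}$, i.e. $e_iT^2=C(T-\widetilde{r}_i)$, hence $\widetilde{r}_i=T\bigl(1-e_iT/C\bigr)$. Summing over $i\in\mathcal{S}$ and using $\sum_{i\in\mathcal{S}}\widetilde{r}_i=T$ forces $T=\frac{(|\mathcal{S}|-1)C}{\sum_{j\in\mathcal{S}}e_j}$, and substituting this back reproduces exactly the closed form for $\widetilde{r}_i$ stated in the theorem. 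For $k\notin\mathcal{S}$, the requirement $\widetilde{r}_k=0=\Gamma_k(\mathbf{\widetilde{r}}_{-k})$ holds, by~\eqref{node_br}, precisely when $e_k\,T_{-k}\ge C$; since $k$ is inactive, $T_{-k}=T$, so this is $e_k\ge\frac{\sum_{j\in\mathcal{S}}e_j}{|\mathcal{S}|-1}$.

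Consequently, the candidate profile is a genuine NE of $FG(f)$ as soon as $\mathcal{S}$ satisfies three conditions: (a) $e_i<\frac{\sum_{j\in\mathcal{S}}e_j}{|\mathcal{S}|-1}$ for every $i\in\mathcal{S}$, which is equivalent to $\widetilde{r}_i>0$ and, as a short computation shows, also guarantees $e_i\,T_{-i}\le C$ so that the interior branch of~\eqref{node_br} is indeed the one used; (b) $e_k\ge\frac{\sum_{j\in\mathcal{S}}e_j}{|\mathcal{S}|-1}$ for every $k\notin\mathcal{S}$; and (c) $|\mathcal{S}|\ge2$, which makes $T_{-i}>0$ for all $i$ and so legitimizes the concavity argument. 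The remaining work is to check that the set $\mathcal{S}_f$ returned by Algorithm~\ref{ag:S_set} meets (a)--(c). Write $\mathcal{S}_f=\{\sigma_1,\dots,\sigma_m\}$ with $e_{\sigma_1}\le\cdots\le e_{\sigma_m}$. Condition (c) holds because $|\mathcal{S}_f|\ge2$ by initialization and $R\ge2$. For (b): the \textbf{while} loop halts at the first index $m+1$ for which $e_{\sigma_{m+1}}\ge\frac{\sum_{j\in\mathcal{S}_f}e_{\sigma_j}}{|\mathcal{S}_f|-1}$ (or else the list is exhausted, in which case $\mathcal{S}_f=\mathcal{R}$ and (b) is vacuous), and every subsequent AP has cost at least $e_{\sigma_{m+1}}$, hence at least that threshold. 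For (a), the crucial algebraic identity is that the test ``$e_{\sigma_m}<\frac{\sum_{j=1}^{m-1}e_{\sigma_j}}{m-2}$'' carried out by the loop at the moment $\sigma_m$ is admitted is equivalent to ``$e_{\sigma_m}<\frac{\sum_{j=1}^{m}e_{\sigma_j}}{m-1}$'', since both amount to $(m-2)\,e_{\sigma_m}<\sum_{j=1}^{m-1}e_{\sigma_j}$; thus the admission test already yields (a) for the most expensive member $\sigma_m$, and hence for every $\sigma_k\in\mathcal{S}_f$ because $e_{\sigma_k}\le e_{\sigma_m}$ (the base case $m=2$ being immediate from $e_{\sigma_1}>0$). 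Putting the pieces together, $\mathbf{\widetilde{r}}^f$ is an NE of $FG(f)$ for each $f$, so by Proposition~\ref{pr:sg_dec} the profile $\mathbf{\widetilde{r}}$ is an NE of the FG-MLMF-SG.

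The step I expect to be the main obstacle is the consistency argument in the last paragraph: showing that the greedy stopping rule of Algorithm~\ref{ag:S_set} partitions the APs so that every member of $\mathcal{S}_f$ has strictly positive equilibrium offloading while every non-member genuinely prefers $\widetilde{r}^f_k=0$. This rests on the equivalence between the algorithm's admission test and the positivity condition (a) together with the monotone ordering of the costs; the rest of the proof is essentially plugging the candidate profile into~\eqref{node_br} and simplifying.
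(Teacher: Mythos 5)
Your proposal is correct and follows essentially the same route as the paper's proof: decompose via Proposition~\ref{pr:sg_dec}, then verify that the stated profile is a fixed point of the best-response map~\eqref{node_br}, using the algorithm's admission test (in its equivalent form $(|\mathcal{S}_f|-1)e^f_{\sigma_i}<\sum_{j\in\mathcal{S}_f}e^f_j$) for members of $\mathcal{S}_f$ and the loop's exit condition for non-members. Your extra touches — the explicit concavity check of $V_i^f$ in its own variable and the derivation of the closed form from the support ansatz — are sound but only make explicit what the paper assumes from its earlier derivation of~\eqref{node_br}.
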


\begin{proof}
Please refer to Appendix~\ref{apdix:exist_f_no_lim}.
\end{proof}

After proving the existence of an NE of the FG-MLMF-SG, we next prove the uniqueness of the NE.


\begin{theorem}
\label{th:uniqu_f_no_lim}
Given $C^f$, denote the strategy profile of an NE by $\hat{\mathbf{r}} = (\hat{\mathbf{r}}^1,\hat{\mathbf{r}}^2,\cdots,\hat{\mathbf{r}}^{|\mathcal{F}|})$, where $\hat{\mathbf{r}}^f=(\hat{r}^f_1,\hat{r}^f_2,\cdots,\hat{r}^f_{|\mathcal{R}|})$, and define $\hat{\mathcal{S}}_f=\{i\in \mathcal{R}: \hat{r}^f_i > 0\}$. Then, we have
\begin{enumerate}
   \item $\hat{r}^f_i=\frac{(|\mathcal{\hat{S}}_f|-1)C^f}{\sum_{j\in \mathcal{\hat{S}}_f}e^f_j} \Big(1-\frac{(|\mathcal{\hat{S}}_f|-1)e^f_i}{\sum_{j\in \mathcal{\hat{S}}_f}e^f_j}\Big)$ if $i\in \hat{\mathcal{S}}_f$; $\hat{r}^f_i=0$ otherwise;
  \item We sort $\{e^f_j: j\in \mathcal{R} \}$ to $e^f_{\sigma_1} \leq e^f_{\sigma_2} \leq \cdots \leq e^f_{\sigma_R}$, then $\mathcal{\hat{S}}_f=\{\sigma_1,\cdots,\sigma_i\}$, where $\sigma_1,\cdots,\sigma_R$ is a permutation of $\mathcal{R}$ given $f$, $e^f_{\sigma_{i+1}} \geq \frac{\sum^{i}_{j=1} e^f_{\sigma_j} }{i-1}$, and $i\geq2$.


\end{enumerate}
These statements imply that the FG-MLMF-SG has a unique NE.
\end{theorem}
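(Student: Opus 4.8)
The plan is to reduce everything to a single follower game and then pin down its equilibrium support by a monotonicity argument. By Proposition~\ref{pr:sg_dec} the FG-MLMF-SG is the product of the independent games $FG(1),\dots,FG(|\mathcal F|)$, so it suffices to show that each $FG(f)$ has a unique NE; fix $f$ and let $\hat{\mathbf r}^f$ be an arbitrary NE with support $\hat{\mathcal S}_f=\{i:\hat r^f_i>0\}$. First I would show the NE is determined by $\hat{\mathcal S}_f$: a support with $|\hat{\mathcal S}_f|\le 1$ cannot occur (the best response~\eqref{node_br} of a lone active AP is $0$), and for $i\in\hat{\mathcal S}_f$ the interior branch of~\eqref{node_br} gives $T=\sqrt{C^f(T-\hat r^f_i)/e^f_i}$ with $T:=\sum_{j\in\hat{\mathcal S}_f}\hat r^f_j$, hence $T-\hat r^f_i=T^2 e^f_i/C^f$; summing over $i\in\hat{\mathcal S}_f$ and dividing by $T>0$ yields $T=\frac{(|\hat{\mathcal S}_f|-1)C^f}{\sum_{j\in\hat{\mathcal S}_f}e^f_j}$, and back-substitution reproduces the closed form of statement~1. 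This part is just bookkeeping with the first-order conditions.

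Next I would extract the conditions characterising an admissible support. Writing $\theta:=\frac{\sum_{j\in\hat{\mathcal S}_f}e^f_j}{|\hat{\mathcal S}_f|-1}=C^f/T$, positivity of $\hat r^f_i$ for $i\in\hat{\mathcal S}_f$ is equivalent to $e^f_i<\theta$, while for $k\notin\hat{\mathcal S}_f$ the requirement that $0$ be a best response (using $\sum_{j\ne k}\hat r^f_j=T$ and~\eqref{node_br}) is equivalent to $e^f_k\ge\theta$. Since inclusion and exclusion are decided by comparing the cost to the \emph{same} threshold $\theta$, after sorting $e^f_{\sigma_1}\le\cdots\le e^f_{\sigma_R}$ the set $\hat{\mathcal S}_f$ must be a prefix $\{\sigma_1,\dots,\sigma_m\}$ with $m\ge 2$, and the two conditions reduce to $e^f_{\sigma_m}<\theta(m)$ and (if $m<R$) $e^f_{\sigma_{m+1}}\ge\theta(m)$, where $\theta(m):=\frac{\sum_{j=1}^m e^f_{\sigma_j}}{m-1}$.

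The crux — and the step I expect to be the main obstacle — is showing this prefix length is unique. I would establish two elementary facts about $\theta(\cdot)$, both from $\theta(m+1)=\frac{(m-1)\theta(m)+e^f_{\sigma_{m+1}}}{m}$: (i) $e^f_{\sigma_{m+1}}<\theta(m)\iff\theta(m+1)<\theta(m)$, and (ii) $e^f_{\sigma_{m+1}}\ge\theta(m)\implies e^f_{\sigma_{m+1}}\ge\theta(m+1)$. Fact (ii), together with the sorted order $e^f_{\sigma_{m+2}}\ge e^f_{\sigma_{m+1}}$, shows that once the exclusion inequality $e^f_{\sigma_{k+1}}\ge\theta(k)$ holds at some $k$ it holds at every larger $k$; hence there is a well-defined smallest index $m^\star\ge2$ at which it first holds (or $m^\star=R$), and this is precisely where Algorithm~\ref{ag:S_set} halts. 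A shorter prefix $m<m^\star$ fails the exclusion condition at $\sigma_{m+1}$; a longer prefix $m>m^\star$ fails positivity, because iterating (ii) from $m^\star$ forces $e^f_{\sigma_m}\ge\theta(m)$. For $m^\star$ itself, positivity $e^f_{\sigma_{m^\star}}<\theta(m^\star)$ follows (for $m^\star\ge3$) from the failure of exclusion at $m^\star-1$, i.e. $e^f_{\sigma_{m^\star}}<\theta(m^\star-1)$, via the identity above, and for $m^\star=2$ it is just $e^f_{\sigma_1}>0$. Thus $\hat{\mathcal S}_f=\{\sigma_1,\dots,\sigma_{m^\star}\}$ is forced, so $\hat{\mathbf r}^f$ is unique; as $f$ was arbitrary, the FG-MLMF-SG has a unique NE. Throughout I would track the degenerate cases — the $m=2$ boundary, possible ties among the $e^f_j$, and the behaviour of $V^f_i$ when fewer than two APs are active — to be sure the strict/non-strict inequalities are handled correctly.
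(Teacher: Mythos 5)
Your proposal is correct and follows the same route as the paper's proof: decompose via Proposition~\ref{pr:sg_dec}, recover the closed form of $\hat r^f_i$ by summing the first-order conditions over the support, rule out supports of size $0$ or $1$ by exhibiting a profitable deviation, and characterize the support by comparing each cost to the common threshold $\theta=\frac{\sum_{j\in\hat{\mathcal S}_f}e^f_j}{|\hat{\mathcal S}_f|-1}$, which forces a prefix of the sorted costs. The one place you go beyond the paper is the step you correctly identify as the crux: the paper stops after showing that any NE support is a prefix $\{\sigma_1,\dots,\sigma_m\}$ satisfying $e^f_{\sigma_j}<\theta(m)$ for $j\le m$ and $e^f_{\sigma_{m+1}}\ge\theta(m)$, and simply asserts that this ``threshold structure'' yields uniqueness, without verifying that only one prefix length can satisfy both conditions. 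Your recursion $\theta(m+1)=\frac{(m-1)\theta(m)+e^f_{\sigma_{m+1}}}{m}$, together with facts (i) and (ii), supplies exactly the missing monotonicity argument (once the exclusion inequality holds it persists, shorter prefixes violate exclusion, longer ones violate positivity), and it also disposes of the boundary-tie worry, since a tie $e^f_{\sigma_{m^\star}}=e^f_{\sigma_{m^\star+1}}$ at the stopping index would contradict $e^f_{\sigma_{m^\star}}<\theta(m^\star)\le e^f_{\sigma_{m^\star+1}}$. So your write-up is not just consistent with the paper's proof but closes a small logical gap in it.
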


\begin{proof}
Please refer to Appendix~\ref{apdix:uniqu_f_no_lim}.
\end{proof}

Theorem~\ref{th:exist_f_no_lim} and Theorem~\ref{th:uniqu_f_no_lim} imply that there exists a unique NE in the FG-MLMF-SG.

\subsection{Leaders' Game}

According to the above analysis, the flows, which are the leaders in the MLMF-SG, know that there exists a unique NE for the APs for any given pricing vector $\mathbf{C}$. Hence, each flow $f$ can maximize its benefit by setting $C^f$.

Given a specific flow $f$, feeding back into~\eqref{ob_flow}, we have
\begin{equation*}
    U_f= u_f\Big(\sum_{i\in \mathcal{R}} \log (1+r^f_i)\Big)-C^f=u_f\Big(\sum_{i\in \mathcal{S}_f} \log \big(1+C^f k_i \big)\Big)-C^f,
\end{equation*}
where $k_i = \frac{|\mathcal{S}_f|-1}{\sum_{j\in \mathcal{S}_f}e^f_j} \Big(1-\frac{(|\mathcal{S}_f|-1)e^f_i}{\sum_{j\in \mathcal{S}_f}e^f_j}\Big)$.
\begin{theorem}
There exists a unique NE of the leaders' game in the MLMF-SG.
\end{theorem}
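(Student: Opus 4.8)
The plan is to exploit the decoupled structure that Proposition~\ref{pr:sg_dec} and Theorems~\ref{th:exist_f_no_lim}--\ref{th:uniqu_f_no_lim} have already exposed. By Proposition~\ref{pr:sg_dec} the followers' game splits into the independent games $FG(1),\dots,FG(|\mathcal{F}|)$, and by Theorems~\ref{th:exist_f_no_lim}--\ref{th:uniqu_f_no_lim} the unique follower NE of $FG(f)$ --- together with the optimal offloading set $\mathcal{S}_f$ produced by Algorithm~\ref{ag:S_set} and the coefficients $k_i$ --- is determined \emph{solely} by the costs $\{e^f_j\}_{j\in\mathcal{R}}$, not by $C^f$ and not by $\mathbf{C}^{-f}$. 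Hence the leader utility $U_f = u_f\big(\sum_{i\in\mathcal{S}_f}\log(1+C^f k_i)\big) - C^f$ is a function of $C^f$ alone, so the leaders' game decouples into $|\mathcal{F}|$ scalar problems $\max_{C^f\ge 0} U_f(C^f)$. Since each leader's best-response correspondence is then constant in the opponents' prices, a profile $\widetilde{\mathbf{C}}$ is an NE if and only if each $\widetilde{C}^f$ solves the corresponding scalar problem; it therefore suffices to prove that $\max_{C\ge 0}U_f(C)$ has a unique maximizer for every $f$, and existence and uniqueness of the leaders' NE follow at once by assembling $\widetilde{\mathbf{C}}=(\widetilde{C}^1,\dots,\widetilde{C}^{|\mathcal{F}|})$.

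Write $g_f(C) \triangleq \sum_{i\in\mathcal{S}_f}\log(1+C k_i)$, so $U_f(C) = u_f(g_f(C)) - C$. First I would check $k_i>0$ for every $i\in\mathcal{S}_f$: the membership test in Algorithm~\ref{ag:S_set} guarantees $e^f_i < \frac{\sum_{j\in\mathcal{S}_f}e^f_j}{|\mathcal{S}_f|-1}$, equivalently $\frac{(|\mathcal{S}_f|-1)e^f_i}{\sum_{j\in\mathcal{S}_f}e^f_j}<1$, so the bracket in the definition of $k_i$ is positive (and $|\mathcal{S}_f|\ge 2$ always, since $\mathcal{S}_f$ is initialized with two elements). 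Therefore $g_f$ is well defined, $C^2$, strictly increasing and strictly concave on $[0,\infty)$, with $g_f(0)=0$, $g_f'(C)=\sum_{i\in\mathcal{S}_f}\frac{k_i}{1+C k_i}>0$ and $g_f''(C)=-\sum_{i\in\mathcal{S}_f}\frac{k_i^2}{(1+C k_i)^2}<0$. Because $u_f$ is continuously differentiable with $u_f'>0$ and $u_f''\le 0$, the chain rule gives
\begin{equation*}
\frac{d^2}{dC^2}\,u_f(g_f(C)) = u_f''(g_f(C))\,\big(g_f'(C)\big)^2 + u_f'(g_f(C))\,g_f''(C) < 0,
\end{equation*}
the first term being $\le 0$ and the second strictly negative; subtracting the linear term $C$ keeps $U_f$ \emph{strictly} concave on $[0,\infty)$.

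Next I would establish existence of a maximizer. Concavity of $u_f$ with $u_f(0)=0$ yields $u_f(w)\le u_f'(0)\,w$ for $w\ge 0$, hence $U_f(C)\le u_f'(0)\,g_f(C) - C$; since $g_f(C)$ grows only like $|\mathcal{S}_f|\log C$, the right-hand side tends to $-\infty$ as $C\to\infty$, so $U_f$ is coercive on $[0,\infty)$. A continuous coercive function on $[0,\infty)$ attains its supremum, and strict concavity makes the maximizer $\widetilde{C}^f$ unique. Equivalently, $U_f'(C)=u_f'(g_f(C))g_f'(C)-1$ is continuous and strictly decreasing with $U_f'(C)\to -1$ as $C\to\infty$ (because $u_f'(g_f(C))\le u_f'(0)$ while $g_f'(C)\to 0$), so either $U_f'(0)\le 0$ and $\widetilde{C}^f=0$, or $U_f'(0)>0$ and $\widetilde{C}^f$ is the unique root of $u_f'(g_f(C))g_f'(C)=1$; in both cases the maximizer is unique. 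Collecting the $\widetilde{C}^f$ gives the unique leaders' NE.

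The one genuinely substantive step is the decoupling observation: once one recognizes that $\mathcal{S}_f$ and the $k_i$ are constants independent of every price, the leaders' game is merely a product of one-dimensional concave programs and everything else is routine. The only points requiring a little care are the strict-concavity argument for the composition $u_f\circ g_f$ (which rests on $g_f''<0$, hence on $k_i>0$) and the bookkeeping for the possible corner solution $\widetilde{C}^f=0$ when $u_f'(0)\sum_{i\in\mathcal{S}_f}k_i\le 1$.
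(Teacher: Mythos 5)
Your proof is correct and follows essentially the same route as the paper: both establish strict concavity of $U_f$ in $C^f$ via the second derivative and then extract a unique maximizer per flow, relying on the fact that $\mathcal{S}_f$ and the $k_i$ are determined by the costs alone so the leaders' problems decouple. You simply make explicit several points the paper leaves implicit, namely the positivity of the $k_i$, the coercivity argument showing $U_f(C^f)\to-\infty$, and the possible corner solution $\widetilde{C}^f=0$.
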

\begin{proof}
Given a specific flow $f$, the second derivative of $U_f$ with respect to $C^f$ is
\begin{align*}
    \frac{\partial^2 U_f}{\partial^2 C^f}= u''_f\Big(\sum_{i\in \mathcal{S}} \log (1+C^f k_i)\Big) \Big(\sum_{i\in \mathcal{S}}\frac{k_i}{1+C^f k_i}\Big)^2-u'_f\Big(\sum_{i\in \mathcal{S}} \log (1+C^f k_i)\Big)\sum_{i\in \mathcal{S}}\frac{k^2_i}{[1+C^f k_i]^2} < 0.
\end{align*}
Thus, $U_f= u_f\Big(\sum_{i\in \mathcal{R}} \log (1+r^f_i)\Big)-C^f$ is concave in $C^f$ for $C^f\in[0,\infty)$. Since $U_f|_{C^f=0}=0$ and $U_f|_{C^f=\infty}=-\infty$, $U_f$ has a unique maximizer, denoted by $\widetilde{C}^f=\argmax_{C^f}U^f$.
The $\widetilde{C}^f,~f\in \mathcal{F}$, compose the price vector $\widetilde{\mathbf{C}}$ which achieves the unique NE of the leaders' game in the MLMF-SG.
\end{proof}

Thus far, we have established the existence and uniqueness of the NE for the MLMF-SG when the offloading capacity of the APs is not limited. However, due to hardware limitation and energy consumption limits, in practice, constraints on the APs' offloading capability are inevitable, which makes the interaction between APs and flows more complex. In the next section, we will further study the properties of the NE of the MLMF-SG if a constraint on APs' offloading capacity is present.

\section{Stackelberg Game Equilibrium Analysis With Capacity Bound}
\label{sec:with_CB}

In the previous section, we have analyzed the NE of the considered MLMF-SG for the case when the offloading capacity of the APs is not limited. Now, we consider the game if a capacity constraint on the APs is present, and characterize the properties of the NE. 
First, we establish some structural properties of some relevant quantities in the leaders' game, and then we prove the existence and uniqueness of the NE for the leaders' game in the MLMF-SG. Finally, we present a distributed pricing algorithm for the leaders' game that converges to the unique equilibrium.

\subsection{Followers' Game}

To make the analysis of the game tractable, we assume that the offloading cost of a specific flow does not depend on the APs, that is, $e^f_i=e^f$ for any AP $ i\in \mathcal{R}$ given $f$. Note that this assumption is reasonable as all APs are assumed to be located in the vicinity of flow $f$.

We commence our discussion of the properties of the equilibrium by considering the best response of AP $i$ using the strategy $\mathbf{r}_i=(r^1_i,\cdots,r^{|\mathcal{F}|}_i)$. The corresponding optimization problem from the perspective of AP $i$ can be stated as:
\begin{align}
 \label{op_node}
\max_{\mathbf{r}_i} V_i(\mathbf{r}_i,\mathbf{r}_{-i})   ~~ s.t. ~~  \sum_{f\in \mathcal{F}} r^f_i \leq B,~~  r^f_i \geq 0,~~\forall f\in \mathcal{F}.
\end{align}

Thus, the corresponding Lagrangian function is given by:
\begin{align}
\label{op_node_lag}
    L(\mathbf{r}_i,\lambda_i,\mathbf{\nu})&=V_i(\mathbf{r}_i,\mathbf{r}_{-i})
    -\lambda_i\cdot\Big(\sum_{f\in \mathcal{F}} r^f_i - B\Big)+\sum_{f\in \mathcal{F}} \nu^f_i r^f_i.
\end{align}

Since $V_i$ is continuously differentiable in $r^f_i$, it follows that the Karush-Kuhn-Tucker (KKT) conditions corresponding to problem~\eqref{op_node_lag} are necessary for optimality. On the other hand, we note from~\eqref{ob_node} that, for a fixed $\mathbf{r}_{-i}$, function $V_i(\mathbf{r}_i,\mathbf{r}_{-i})$ is concave in $\mathbf{r}_i$ although it is not concave in $\mathbf{r}$ according to Lemma~\ref{lemma:concave}. This implies that the KKT conditions are sufficient for optimality as well. Thus, we conclude that a strategy profile is an equilibrium if and only if (i.i.f) there exist $\lambda_i \geq 0$ and $\{\nu^f_i \geq 0,f\in \mathcal{F}\}$ such that the following conditions are satisfied:
\begin{align*}
    (A_1):&  \quad \frac{\partial V_i}{\partial r^f_i} = \lambda_i - \nu^f_i, ~~~\forall f\in \mathcal{F} \\
    (A_2):&  \quad \lambda_i\cdot\Big(\sum_f r^f_i - B\Big)=0 \\
    (A_3):&  \quad  \nu^f_i r^f_i= 0, ~~~\forall f\in \mathcal{F}.
\end{align*}



For ease of further discussion, we introduce the concept of strictly interior equilibrium which is formally defined as follows:

\begin{definition}
We say that an equilibrium is a strictly interior equilibrium if the offloading size of any AP $i\in \mathcal{R}$ for any flow $f\in \mathcal{F}$ is strictly positive, i.e., $r^f_i >0 $.
\end{definition}

Now, we are ready to provide the following theorem, which guarantees the symmetry of a strictly interior equilibrium.

\begin{theorem}
\label{th:sym_interior_equi}
If a strictly interior equilibrium exists in the followers' game, then it is symmetrical, i.e., $r^f_i=r^f$ for any $i\in \mathcal{R}$.
\end{theorem}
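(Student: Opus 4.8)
The plan is to push on the KKT characterization $(A_1)$--$(A_3)$ already derived for problem~\eqref{op_node}, specialized by the hypothesis that we sit at a \emph{strictly interior} equilibrium. First I would use strict interiority, $r^f_i>0$ for all $i\in\mathcal{R}$ and $f\in\mathcal{F}$, together with complementary slackness $(A_3)$: since $\nu^f_i r^f_i=0$ and $r^f_i>0$, we get $\nu^f_i=0$ for every $i,f$. Then $(A_1)$ collapses to $\frac{\partial V_i}{\partial r^f_i}=\lambda_i$ for all $f\in\mathcal{F}$ --- AP $i$'s marginal payoff is the \emph{same} across all flows and equals its capacity multiplier $\lambda_i$. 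I would also record, for use later, that at such an equilibrium $C^f>0$ and $T^f:=\sum_{j\in\mathcal{R}}r^f_j>0$ necessarily hold (if $C^f=0$ then $\frac{\partial V_i}{\partial r^f_i}=-e^f<0$ and AP $i$ would strictly prefer $r^f_i=0$, contradicting interiority; positivity of $T^f$ is immediate from interiority).

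Next I would make the stationarity condition explicit. With $e^f_i=e^f$ (the standing assumption in this subsection) and the notation $T^f$ above,
\begin{equation*}
\frac{\partial V_i}{\partial r^f_i}=C^f\,\frac{T^f-r^f_i}{(T^f)^2}-e^f,
\end{equation*}
so the reduced $(A_1)$ reads $C^f\,(T^f-r^f_i)/(T^f)^2=\lambda_i+e^f$ for every $f$. Writing this identity for two arbitrary APs $i$ and $k$ and subtracting, the common $e^f$ cancels and I obtain
\begin{equation*}
\frac{C^f}{(T^f)^2}\,(r^f_k-r^f_i)=\lambda_i-\lambda_k,
\qquad\text{i.e.}\qquad
r^f_k-r^f_i=\frac{(T^f)^2}{C^f}(\lambda_i-\lambda_k).
\end{equation*}

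The crux of the argument --- the step I expect to be the one that needs care --- is to rule out $\lambda_i\neq\lambda_k$. Summing the last display over $f\in\mathcal{F}$ gives $\sum_{f}r^f_k-\sum_{f}r^f_i=(\lambda_i-\lambda_k)\,K$, where $K:=\sum_{f\in\mathcal{F}}(T^f)^2/C^f>0$ by the positivity facts from the first step. Suppose for contradiction that $\lambda_i>\lambda_k$ for some pair. Then $\sum_{f}r^f_k>\sum_{f}r^f_i$; moreover $\lambda_i>\lambda_k\geq0$ forces $\lambda_i>0$, so complementary slackness $(A_2)$ yields $\sum_{f}r^f_i=B$; but then $\sum_{f}r^f_k>B$, contradicting the feasibility constraint $\sum_{f}r^f_k\leq B$. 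By symmetry the reverse strict inequality is impossible too, so $\lambda_i=\lambda_k$ for every pair $i,k$. Substituting $\lambda_i=\lambda_k$ back into the subtraction identity gives $r^f_k=r^f_i$ for all $f$ and all $i,k$, which is exactly the asserted symmetry $r^f_i=r^f$.

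The only obstacles are bookkeeping: making sure the derivative computation is legitimate (it is, because $T^f>0$ at an interior point), and making sure $C^f>0$ so that $K$ is a strictly positive finite quantity --- both are dispatched in the first step. After that the argument is just the subtraction trick plus the tension, via $(A_2)$ and feasibility, between a strictly positive multiplier and the \emph{shared} capacity bound $B$; no concavity of $V_i$ in the full profile $\mathbf{r}$ is needed beyond what the KKT setup already provides.
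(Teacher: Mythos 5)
Your proof is correct and follows essentially the same route as the paper's: strict interiority kills the $\nu^f_i$ via $(A_3)$, the explicit derivative formula turns a discrepancy $r^f_i\neq r^f_k$ into an ordering of the multipliers $\lambda_i,\lambda_k$ that propagates to every flow, and the contradiction comes from the tension between a strictly positive multiplier and the shared capacity bound via $(A_2)$. The only cosmetic difference is the final step --- the paper concludes $\lambda_i=0$ from $\sum_f r^f_i<B$, while you conclude $\sum_f r^f_k>B$ from $\lambda_i>0$ --- which is the same contradiction read in the other direction.
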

\begin{proof}
Please refer to Appendix~\ref{apdix:sym_interior_equi}.
\end{proof}

Thus, in the following, we focus on symmetric strategy profiles, that is, all nodes use a symmetric strategy, i.e., $r^f_i=r^f$ for any $i\in \mathcal{R}$. To this end, we define the function
\begin{equation*}
    g^f(r^f)\triangleq \frac{\partial V_i}{\partial  r^f_i }\Big|_{r^f_j = r^f,~ \forall j\in \mathcal{R}}=C^f  \frac{R-1}{R^2 r^f} - e^f = C^f h^f(r^f)- e^f,
\end{equation*}
where $h^f(r^f)\triangleq \frac{R-1}{R^2 r^f}$.

Given a symmetric strategy profile, by Theorem~\ref{th:sym_interior_equi}, the KKT conditions for~\eqref{op_node_lag} can be refined to the existence of $\lambda_i \geq 0$ and $\{\nu^f_i = 0,f\in \mathcal{F}\}$ such that (A1)-(A3) are satisfied.

Now, we are ready to state the main result of this subsection.
\begin{theorem}
\label{th:sym_exist}
For any vector of flow price $\mathbf{C}$, there exists a unique set of $\{\rho^f, f\in \mathcal{F}\}$ such that the symmetric strategy profile $\{r^f_j = \rho^f, j\in \mathcal{R}\}$ is a Nash equilibrium. Furthermore, there exist $\lambda\geq0$ and $\{\nu^f = 0, f\in \mathcal{F} \}$, such that
\begin{align*}
    (B_1):&  \quad g^f(\rho^f) = \lambda  - \nu^f, ~~~\forall f\in \mathcal{F} \\
    (B_2):&  \quad \lambda\Big(\sum_{f\in \mathcal{F}} \rho^f - B\Big)=0 \\
    (B_3):&  \quad  \nu^f \rho^f= 0, ~~~\forall f\in \mathcal{F}.
\end{align*}
\end{theorem}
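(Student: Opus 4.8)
The plan is to recall from the discussion preceding the theorem that a symmetric profile $\{r^f_j=\rho^f,\ j\in\mathcal{R}\}$ is a Nash equilibrium if and only if (i.i.f.) there exist $\lambda\ge0$ and $\{\nu^f=0\}$ satisfying $(B_1)$--$(B_3)$, and then to solve this system explicitly by a one-dimensional monotonicity argument in $\lambda$. The equivalence uses two facts already established: for fixed $\mathbf{r}_{-i}$ the payoff $V_i(\mathbf{r}_i,\mathbf{r}_{-i})$ is concave in $\mathbf{r}_i$ (even though, by Lemma~\ref{lemma:concave}, it is not concave in $\mathbf{r}$), so the KKT conditions $(A_1)$--$(A_3)$ for~\eqref{op_node} are necessary and sufficient for $\mathbf{r}_i$ to be a best response; and, by Theorem~\ref{th:sym_interior_equi} together with the blow-up $g^f(r)\to+\infty$ as $r\to0^+$ (which rules out $\rho^f=0$), any symmetric equilibrium is strictly interior, so $\nu^f=0$ and $\partial V_i/\partial r^f_i$ evaluated at the symmetric profile is exactly $g^f(\rho^f)$. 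Hence it suffices to prove that $(B_1)$--$(B_3)$ has a unique solution $\big(\{\rho^f\},\lambda\big)$ with $\lambda\ge0$.

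The next step is to parametrize candidate solutions by $\lambda$. Since $g^f(r)=C^f\frac{R-1}{R^2 r}-e^f$, condition $(B_1)$ with $\nu^f=0$ forces
\[
    \rho^f(\lambda)=\frac{C^f(R-1)}{R^2\,(\lambda+e^f)},
\]
which is strictly positive for every $\lambda\ge0$ because $C^f>0$, $e^f>0$, and $R\ge2$; in particular $(B_3)$ holds automatically. I then define $\Phi(\lambda)\triangleq\sum_{f\in\mathcal{F}}\rho^f(\lambda)$ on $[0,\infty)$ and observe that $\Phi$ is continuous, strictly decreasing, with $\Phi(\lambda)\to0$ as $\lambda\to\infty$.

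It remains to pin down $\lambda$ via the complementary-slackness condition $(B_2)$. If $\Phi(0)=\sum_{f\in\mathcal{F}}\frac{C^f(R-1)}{R^2 e^f}\le B$, then $\lambda=0$ satisfies $(B_2)$ and is the only admissible value, since any $\lambda>0$ would require $\Phi(\lambda)=B$, which is impossible as $\Phi(\lambda)<\Phi(0)\le B$. If instead $\Phi(0)>B$, then $\lambda=0$ violates $(B_2)$, while by the intermediate value theorem and strict monotonicity of $\Phi$ there is a unique $\lambda^\star>0$ with $\Phi(\lambda^\star)=B$. In either case $\big(\{\rho^f(\lambda)\},\lambda\big)$ is the unique solution of $(B_1)$--$(B_3)$, and $\{r^f_j=\rho^f(\lambda)\}$ is therefore the unique symmetric Nash equilibrium.

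The part that requires the most care is the case split in the last paragraph: one must verify that the constraint $\sum_f \rho^f\le B$ is respected when $\lambda=0$ (capacity slack) and exactly tight when $\lambda>0$ (capacity binding), and that these two cases are mutually exclusive and exhaustive — this is precisely where the strict monotonicity of $\Phi$ and the limit $\Phi(\infty)=0$ are used. The reduction to $(B_1)$--$(B_3)$ itself is not hard given the concavity of $V_i$ in $\mathbf{r}_i$ and Theorem~\ref{th:sym_interior_equi}, and everything after the reduction is the elementary monotone-function argument above.
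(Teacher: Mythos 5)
Your proof is correct, but it takes a genuinely different route from the paper's. The paper also begins by observing that $(B_1)$--$(B_3)$ coincide with the KKT conditions $(A_1)$--$(A_3)$ at a symmetric profile, but then establishes existence and uniqueness by a potential-function argument: it defines $W(\mathbf{x})=\sum_{f\in\mathcal{F}}\int_0^{x^f}g^f(\xi)\,d\xi$, notes that $W$ is concave because each $g^f$ is decreasing, and identifies the unique maximizer of $W$ over $\{\sum_f x^f\le B,\ x^f\ge0\}$ with the unique solution of $(B_1)$--$(B_3)$. You instead invert $g^f$ explicitly to get $\rho^f(\lambda)=\frac{C^f(R-1)}{R^2(\lambda+e^f)}$ and run a one-dimensional monotonicity argument on $\Phi(\lambda)=\sum_f\rho^f(\lambda)$. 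Your route buys more: it gives a closed form for the equilibrium as a function of the multiplier and an explicit threshold $\sum_f\frac{C^f(R-1)}{R^2e^f}\lessgtr B$ separating the slack and binding regimes, and it sidesteps the fact that the paper's $\int_0^{x^f}g^f(\xi)\,d\xi$ is actually divergent at the lower limit (the $1/\xi$ singularity), which the paper glosses over by passing to the $\log$-objective in~\eqref{op_node_convex}. The paper's route is more generic, in that it would survive replacing $h^f$ by any strictly decreasing function without requiring an explicit inverse. Two small imprecisions in your write-up, neither fatal: when $\Phi(0)>B$, taking $\lambda=0$ violates primal feasibility $\sum_f\rho^f\le B$ rather than $(B_2)$ itself (which holds trivially at $\lambda=0$); and Theorem~\ref{th:sym_interior_equi} goes in the direction ``strictly interior $\Rightarrow$ symmetric,'' so the fact that a symmetric equilibrium must have $\rho^f>0$ rests entirely on your blow-up observation $g^f(r)\to+\infty$ as $r\to0^+$ (equivalently, an AP deviating to a tiny $r^f_i>0$ when all peers offload nothing for $f$ captures essentially all of $C^f$), not on that theorem.
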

\begin{proof}
Please refer to Appendix~\ref{apdix:sym_exist}.
\end{proof}


Based on Theorem~\ref{th:sym_exist}, we obtain that the solution of the following convex optimization problem is the NE of the followers' game in the MLMF-SG.
\begin{align}
\label{op_node_convex}
   \max_{\rho^1\cdots \rho^{|\mathcal{F}|}} \sum_{f \in \mathcal{F}}  \Big(C^f  \frac{R-1}{R^2} \log(\rho^f) - e^f \rho^f\Big)
~~s.t. ~~ \sum_{f\in \mathcal{F}} \rho^f \leq B,
       ~~ \rho^f > 0~~\forall f\in \mathcal{F},
\end{align}
which can be easily solved by software packages, such as Matlab.

\subsection{Leaders' Game}

In this subsection, we study the effect of the payment rate $C^f$ of a specific flow $f \in \mathcal{F}$ on the followers' symmetric equilibrium when all other rates $\mathbf{C}^{-f}$ remain fixed. To streamline the discussion, we express the value of $\rho^f$ of the equilibrium corresponding to a given $C^f$ as a function $\rho^f=\Psi(C^f)$ (since we focus only on $\rho^f$ and are not interested in the strategy values for other flows). Also, we define the value of $\lambda$ that satisfies condition (B1)-(B3) in the equilibrium as a function $\lambda=\Lambda(C^f)$.

We begin by exploring these functions for extreme values of $C^f$.
Clearly, for $C^f=0$, the utility of any AP cooperating with flow $f$ is non-positive, implying $\rho^f=\Psi(C^f=0)=0$. However, from the KKT conditions (B1)-(B3), we know $\rho^f>0$, which implies $C^f>0$.
Thus, we assume that $\rho^f$ must be larger than a infinitesimal positive value, i.e., $\rho^f=0^+$. Define $C^f =\Psi^{-1}(\rho^f=0^+)\triangleq\underline{C}^f$ and $\underline{\lambda}=\Lambda(C^f=\underline{C}^f)$. $\Lambda(C^f)$ and $\Psi(C^f)$ have the following properties.


\begin{lemma}
\label{lemma:prop}
$\Lambda(C^f)$ and $\Psi(C^f)$ have the following properties:
\begin{enumerate}
  \item $\lambda=\Lambda(C^f)$ is continuous and non-decreasing in $C^f$;
  \item $\rho^f=\Psi(C^f)$ is continuous, and strictly increasing in $C^f \in (0,\infty)$;
  \item $\rho^f=\Psi(C^f)$ is concave in $C^f \in (0,\infty)$;
\end{enumerate}
\end{lemma}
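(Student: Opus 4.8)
\noindent\emph{Proof plan.}
The plan is to work directly from the KKT characterization of the symmetric equilibrium in Theorem~\ref{th:sym_exist}. Setting $\nu^f=0$ and solving $g^f(\rho^f)=\lambda$ yields $\rho^g=\frac{C^g(R-1)}{R^2(\lambda+e^g)}$ for every $g\in\mathcal{F}$, so, with $\mathbf{C}^{-f}$ held fixed, the equilibrium is pinned down by the single scalar equation involving $G(C^f,\lambda)\triangleq\sum_{g\in\mathcal{F}}\frac{C^g(R-1)}{R^2(\lambda+e^g)}-B$ together with complementary slackness: either $\lambda=0$ and $G(C^f,0)\le0$ (the ``unconstrained'' regime) or $\lambda>0$ and $G(C^f,\lambda)=0$ (the ``constrained'' regime). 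Since $G$ is continuous, strictly decreasing in $\lambda$ on $[0,\infty)$ (as $C^f>0$), strictly increasing in $C^f$, and tends to $-B<0$ as $\lambda\to\infty$, each $C^f>0$ admits exactly one feasible $\lambda$, which defines $\lambda=\Lambda(C^f)$, and then $\Psi(C^f)=\frac{C^f(R-1)}{R^2(\Lambda(C^f)+e^f)}$. Moreover $G(\cdot,0)$ is strictly increasing, so the two regimes are separated by a single threshold and the regime can only switch from unconstrained to constrained as $C^f$ grows.

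For Property~1 I would invoke monotone comparative statics on $G$: on the constrained branch the relation $G(C^f,\lambda)=0$ with $G$ strictly monotone in each argument forces $\Lambda$ to be continuous and strictly increasing; on the unconstrained branch $\Lambda\equiv0$; and at the threshold a strictly positive root of $G(C^f,\cdot)=0$ would contradict $G(C^f,0)=0$ together with strict monotonicity in $\lambda$, so the two branches match, giving continuity and the non-decreasing property throughout $C^f>0$. For Property~2, continuity of $\Psi$ is immediate from continuity of $\Lambda$ and $\Lambda+e^f\ge e^f>0$. For strict monotonicity: on the unconstrained branch $\Psi(C^f)=\frac{R-1}{R^2 e^f}C^f$ is strictly increasing; on the constrained branch $\sum_{g\in\mathcal{F}}\rho^g=B$ with each $\rho^g$, $g\ne f$, strictly decreasing in $\Lambda(C^f)$ and hence in $C^f$, so $\rho^f=B-\sum_{g\ne f}\rho^g=\Psi(C^f)$ is strictly increasing; continuity at the threshold then glues the two pieces into a function strictly increasing on $(0,\infty)$.

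The main effort is Property~3. On the unconstrained branch $\Psi$ is affine, hence concave. On the constrained branch I would parametrize by $\lambda$: put $\psi(\lambda)\triangleq B-\sum_{g\ne f}\frac{C^g(R-1)}{R^2(\lambda+e^g)}$, so that the equilibrium relations read $\rho^f=\psi(\lambda)$ and $C^f=\frac{R^2}{R-1}(\lambda+e^f)\psi(\lambda)$, where $\psi=\rho^f>0$, $\psi'\ge0$, and $\psi''\le0$. Since $C^f$ is a strictly increasing smooth function of $\lambda$ with derivative $\frac{R^2}{R-1}\bigl(\psi+(\lambda+e^f)\psi'\bigr)>0$, a two-fold differentiation shows that $\frac{d^2\rho^f}{d(C^f)^2}$ has the same sign as $\psi\psi''-2(\psi')^2$, which is $\le0$ because $\psi>0$ and $\psi''\le0$; thus $\Psi$ is concave on the constrained branch. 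It then remains to check that the derivative does not increase across the threshold: the left derivative there equals $\frac{R-1}{R^2 e^f}=\frac{\rho^f}{C^f}$, while the right derivative equals $\frac{(R-1)\psi'}{R^2(\psi+e^f\psi')}$, and the inequality $\frac{(R-1)\psi'}{R^2(\psi+e^f\psi')}\le\frac{R-1}{R^2 e^f}$ is equivalent to $0\le\psi=\rho^f$; a standard gluing lemma then delivers concavity of $\Psi$ on $(0,\infty)$.

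I expect Property~3 to be the main obstacle, chiefly the bookkeeping at the regime boundary, since concavity on the interior of the constrained branch collapses to the trivial inequality $\psi\psi''\le 2(\psi')^2$ once the parametrization by $\lambda$ is in place. One technical caveat to flag is the degenerate case in which every flow other than $f$ charges a zero price: then $\psi\equiv B$, so once the capacity binds $\Psi$ becomes constant, and the strict monotonicity in Property~2 should be relaxed to non-decreasing (or one simply restricts to $\mathbf{C}^{-f}$ with positive components).
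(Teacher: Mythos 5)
Your proposal is correct and establishes all three properties, but it takes a more explicit route than the paper. The paper proves Property~1 by a contradiction argument built directly on the KKT system (B1)--(B3) and the monotonicity of $h^f$, proves Property~2 by a two-case contradiction, and proves Property~3 by implicitly differentiating (B1) to obtain $dC^f/d\rho^f$ in each regime, arguing that this inverse derivative is non-decreasing in $\rho^f$ and jumps upward at the regime boundary. You instead solve (B1) in closed form, $\rho^g=\frac{C^g(R-1)}{R^2(\lambda+e^g)}$, collapse the equilibrium of Theorem~\ref{th:sym_exist} to the single scalar relation $G(C^f,\lambda)=0$ (or $\lambda=0$ with $G\le 0$), and read off Properties~1 and~2 by monotone comparative statics on $G$; for Property~3 you parametrize the constrained branch by $\lambda$ and compute the second derivative exactly, which reduces to the sign of $\psi\psi''-2(\psi')^2\le 0$, with the boundary matching becoming the one-line inequality $0\le\psi$. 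This buys you two things. First, your derivation is valid for any number of flows, whereas the paper's intermediate step for $d\lambda/d\rho^f$ on the constrained branch (summing the individual relations $-C^{f'}\tfrac{R-1}{R^2}(\rho^{f'})^{-2}\tfrac{d\rho^{f'}}{d\rho^f}=\tfrac{d\lambda}{d\rho^f}$ against $\sum_{f'\ne f}\tfrac{d\rho^{f'}}{d\rho^f}=-1$) yields the reciprocal of $\sum_{f'\ne f}(\rho^{f'})^2/C^{f'}$ up to constants, not the sum of reciprocals written in the paper; the two agree only for two flows, so the explicit expression \eqref{C_d_p_l1} is literally correct only in that case, while your $\lambda$-parametrization sidesteps the issue. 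Second, your closing caveat is a genuine observation the paper misses: if every $C^{g}$, $g\ne f$, is zero and the capacity binds, then $\Psi$ is eventually constant, so the strictness in Property~2 fails; the paper's Case~II argument implicitly needs the other prices to be positive (its step $\rho^f_a<\rho^f_b$ for $f\ne f_0$ uses strict monotonicity of $C^f h^f(\cdot)$, which degenerates at $C^f=0$). Either excluding that profile or weakening ``strictly increasing'' to ``non-decreasing'' there, as you suggest, is the right fix.
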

\begin{proof}
Please refer to Appendix~\ref{apdix:prop}.
\end{proof}

\begin{lemma}
\label{lemma:u_f_sym_conv}
For a fixed $\mathbf{C}^{-f}$, the function $U_f(C^f, \mathbf{C}^{-f})$ is concave in $C^f$.
\end{lemma}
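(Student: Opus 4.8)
The plan is to collapse $U_f$ into a one–dimensional composition of monotone concave maps and then invoke the standard composition rules for concavity. First I would substitute the followers' symmetric equilibrium response into the leader's payoff. By Theorem~\ref{th:sym_exist} and the definition $\rho^f=\Psi(C^f)$, for a fixed $\mathbf{C}^{-f}$ and any $C^f$ the (unique) followers' equilibrium satisfies $r^f_i=\Psi(C^f)$ for every $i\in\mathcal{R}$; plugging this into~\eqref{ob_flow} gives
\[
U_f(C^f,\mathbf{C}^{-f}) \;=\; u_f\!\Big(R\log\big(1+\Psi(C^f)\big)\Big)\;-\;C^f .
\]
Hence it suffices to show that $C^f\mapsto u_f\big(R\log(1+\Psi(C^f))\big)$ is concave, since subtracting the linear term $C^f$ preserves concavity.

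For the composition step I would argue in two layers. By Lemma~\ref{lemma:prop}, $\Psi$ is continuous, strictly increasing and concave (and nonnegative) in $C^f$. The scalar map $x\mapsto\log(1+x)$ is nondecreasing and concave on $[0,\infty)$, which contains the range of $\Psi$; therefore $C^f\mapsto\log(1+\Psi(C^f))$ is concave, because composing a nondecreasing concave outer function with a concave inner function yields a concave function. Multiplying by $R>0$ keeps this nondecreasing and concave; write $\phi(C^f):=R\log(1+\Psi(C^f))$, which takes values in $[0,\infty)$. Since $u_f$ is nondecreasing ($u'_f>0$) and concave ($u''_f\le0$) on $[0,\infty)$, the same composition rule applied to $u_f\circ\phi$ shows $u_f(\phi(C^f))$ is concave, and so $U_f$ is concave in $C^f$. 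Equivalently, one may differentiate twice through the chain rule and collect terms: $U_f''=u''_f(\phi)\,(\phi')^2+u'_f(\phi)\,\phi''$, where $\phi''\le0$ follows from $\Psi''\le0$ and $1+\Psi>0$, so both summands are $\le0$; the composition argument just packages this computation.

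The one point that needs care is the domain on which $\Psi$ is concave. Lemma~\ref{lemma:prop} asserts concavity of $\Psi$ only on the open interval $(0,\infty)$, and if a threshold $\underline{C}^f>0$ existed below which $\rho^f=0$, then the right derivative of $\Psi$ at $\underline{C}^f$ would be strictly positive while the left derivative is $0$, which would break concavity on any interval containing $\underline{C}^f$. I would close this gap using the observation already made before Lemma~\ref{lemma:prop}: the KKT conditions $(B_1)$--$(B_3)$ force $\rho^f>0$ as soon as $C^f>0$, since $g^f(\rho^f)=C^f h^f(\rho^f)-e^f\to+\infty$ as $\rho^f\to0^+$ cannot equal the finite quantity $\lambda-\nu^f$ (finiteness of $\lambda$ following from Slater feasibility of~\eqref{op_node_convex}). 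Thus $\Psi(0)=0$, $\Psi$ extends continuously and concavely to all of $[0,\infty)$, and the composition argument applies on the full feasible domain of $C^f$; alternatively, it suffices to restrict attention to $C^f\in[\underline{C}^f,\infty)$, the only range a rational leader would ever use, where the argument goes through verbatim. I expect this boundary bookkeeping, rather than the concavity algebra, to be the only real obstacle.
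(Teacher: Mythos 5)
Your proof is correct and follows essentially the same route as the paper: both arguments reduce $U_f$ to $u_f\big(R\log(1+\Psi(C^f))\big)-C^f$ and obtain concavity from the composition of the nondecreasing concave maps $u_f$, $\log(1+\cdot)$, and $\Psi$ (the paper packages this as monotonicity of $\partial U_f/\partial C^f$, which is exactly your chain-rule remark). Your additional care about the behavior of $\Psi$ near $C^f=0$ goes beyond what the paper writes but is consistent with its observation, stated just before Lemma~\ref{lemma:prop}, that the KKT conditions force $\rho^f>0$ whenever $C^f>0$.
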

\begin{proof}
Assuming the followers respond with a symmetric equilibrium, the first order derivative of utility function $U_f$ with respect to $C^f$ is given by
\begin{align}
\label{dUf_dC}
    \frac{\partial U_f}{\partial C^f} = u'_f\Big(R\log (1+\rho^f)\Big) \frac{R}{1+\rho^f}\frac{\partial \rho_f}{\partial C^f} -1,
\end{align}
where $\rho^f=\Psi(C^f)$. Since $u_f(\cdot)$ is concave by assumption, $R\log (1+\rho^f)$ is increasing and concave in $\rho^f$, and $\rho^f$ is concave in $C^f$ by Lemma~\ref{lemma:prop}, it follows that $ \frac{\partial U_f}{\partial C^f} $ is non-decreasing in $C^f$, i.e., $U_f$ is indeed concave in $C^f$.
\end{proof}

\begin{lemma}
\label{lemma:br_bound}
The best-response function $\Upsilon^f(\mathbf{C}^{-f})$ of flow $f$ is bounded by $0\leq \Upsilon^f(\mathbf{C}^{-f})\leq u_f\Big(R \log (1+B)\Big)$.
\end{lemma}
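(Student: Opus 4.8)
The plan is to bound the maximizer of $U_f(\cdot,\mathbf{C}^{-f})$ by comparing the utility it attains with the utility attained at the trivial payment $C^f=0$. The lower bound is immediate, since each leader's strategy set is $[0,\infty)$, so $\Upsilon^f(\mathbf{C}^{-f})\geq 0$.

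For the upper bound I would first record two facts. (i) At $C^f=0$ no AP has an incentive to cooperate with flow $f$, so the followers respond with $\rho^f=\Psi(0)=0$ (as already noted before Lemma~\ref{lemma:prop}); hence $U_f(0,\mathbf{C}^{-f})=u_f\big(R\log(1+0)\big)=u_f(0)=0$, and therefore the best response, being a maximizer, satisfies $U_f\big(\Upsilon^f(\mathbf{C}^{-f}),\mathbf{C}^{-f}\big)\geq 0$. (ii) By Theorem~\ref{th:sym_exist}, for any $C^f$ the induced symmetric equilibrium offloading levels satisfy $\sum_{f'\in\mathcal{F}}\rho^{f'}\leq B$ with every $\rho^{f'}>0$; in particular the coordinate of interest obeys $\rho^f=\Psi(C^f)\leq B$.

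Combining (ii) with the monotonicity of $u_f$ and of $\log$ gives $u_f\big(R\log(1+\rho^f)\big)\leq u_f\big(R\log(1+B)\big)$, hence for every $C^f\geq 0$
\begin{equation*}
    U_f(C^f,\mathbf{C}^{-f})=u_f\big(R\log(1+\rho^f)\big)-C^f\leq u_f\big(R\log(1+B)\big)-C^f .
\end{equation*}
Evaluating this at $C^f=\Upsilon^f(\mathbf{C}^{-f})$ and using fact (i) yields $0\leq u_f\big(R\log(1+B)\big)-\Upsilon^f(\mathbf{C}^{-f})$, which is the claimed bound. Equivalently, any $C^f>u_f\big(R\log(1+B)\big)$ forces $U_f<0=U_f(0,\mathbf{C}^{-f})$, so such a payment is strictly worse than charging nothing and cannot be a best response.

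The argument is short; the only point I would treat with care is that it must not silently assume the supremum of $U_f$ is attained. This is justified by Lemma~\ref{lemma:u_f_sym_conv} ($U_f(\cdot,\mathbf{C}^{-f})$ is concave on $[0,\infty)$) together with the displayed bound, which forces $U_f(C^f,\mathbf{C}^{-f})\to-\infty$ as $C^f\to\infty$; hence $\Upsilon^f(\mathbf{C}^{-f})$ is well defined and finite, and the inequalities above confine it to $\big[0,\,u_f(R\log(1+B))\big]$. Everything else reduces to the feasibility bound $\rho^f\leq B$ coming from the capacity constraint and to monotonicity of the model primitives.
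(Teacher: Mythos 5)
Your proof is correct and follows essentially the same route as the paper: the best response must yield nonnegative utility (since $C^f=0$ gives zero), and since the capacity constraint forces $\rho^f\leq B$, the payment cannot exceed $u_f\big(R\log(1+B)\big)$. Your added care about attainment of the maximum is a reasonable elaboration the paper leaves implicit.
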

\begin{proof}
Notice that $U_f =u_f\Big(R \log (1+\rho^f)\Big)-C^f$. Obviously, for the best response, the utility is nonnegative (utility 0 can always be obtained by $C^f=0$). Hence, $0\leq \Upsilon^f(\mathbf{C}^{-f})\leq \max_{\rho^f} u_f\Big(R \log (1+\rho^f)\Big)=u_f\Big(R \log (1+B)\Big)$.
\end{proof}

%

Due to the concavity of $U_f$ in $C^f$ (Lemma~\ref{lemma:u_f_sym_conv}), a unique solution is guaranteed; furthermore, we observe that if $u_f$ is continuously differentiable, the best response function is continuous as well.

\begin{theorem}
\label{th:ex_uq}
If the followers always respond with their symmetrical NE, then an equilibrium of the leaders' game, i.e., an SNE of the overall system, exists and is unique.
\end{theorem}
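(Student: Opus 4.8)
The plan is to treat the leaders' game as an $|\mathcal{F}|$-player concave game on a compact strategy space, deduce existence from the structural lemmas already established, and then prove uniqueness by a separate best-response analysis organized around the common capacity multiplier $\lambda$. For existence I would first invoke Lemma~\ref{lemma:br_bound}: since every best response $\Upsilon^f(\mathbf{C}^{-f})$ lies in the compact interval $[0,\,u_f(R\log(1+B))]$, restricting each leader's strategy set to this interval does not change the equilibrium set but makes the joint strategy space a nonempty, compact, convex subset of Euclidean space. By Lemma~\ref{lemma:prop}(2) the map $C^f\mapsto\rho^f=\Psi(C^f)$ is continuous, so $U_f(C^f,\mathbf{C}^{-f})=u_f\!\big(R\log(1+\Psi(C^f))\big)-C^f$ is continuous in the full profile, and by Lemma~\ref{lemma:u_f_sym_conv} it is concave in the own variable $C^f$. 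The Debreu--Glicksberg--Fan theorem for concave games then yields a pure-strategy equilibrium $\widetilde{\mathbf{C}}$; paired with the followers' unique symmetric response $\{\rho^f=\Psi(\widetilde C^f)\}$ this is an SNE of the overall game.

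For uniqueness I would distinguish whether the followers' capacity bound is slack or binding at a leaders' equilibrium. If $\sum_{f}\Psi(\widetilde C^f)<B$ the followers' KKT conditions of~\eqref{op_node_convex} decouple ($\lambda=0$) and $\rho^f=\frac{(R-1)C^f}{R^2 e^f}$ depends on $C^f$ alone; then the right-hand side of~\eqref{dUf_dC} is strictly decreasing in $C^f$ (the factor $\frac{R}{1+\rho^f}$ strictly decreases, $u'_f$ is nonincreasing, and $\partial\rho^f/\partial C^f$ is constant), so each flow has a unique maximizer and the whole profile is unique. If the bound binds, all $\rho^f=\frac{(R-1)C^f}{R^2(e^f+\lambda)}$ share one multiplier $\lambda>0$ fixed by market clearing $\sum_f\rho^f=B$; I would characterize a leaders' equilibrium as a fixed point in the scalar $\lambda$: given $\lambda$, the first-order condition from~\eqref{dUf_dC} pins down each $\widetilde C^f$ uniquely because its left-hand side is strictly monotone in $C^f$, and substituting these prices back into market clearing defines a map $\lambda\mapsto\widehat\lambda$ which I would show is strictly monotone and hence has at most one fixed point. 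Combining both cases gives a unique $\widetilde{\mathbf{C}}$, hence a unique SNE.

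The hard part is the binding regime: when flow $f$ raises $C^f$ it also moves the common shadow price $\lambda$ and thereby its own response $\rho^f$, so the derivative $\partial\rho^f/\partial C^f$ appearing in~\eqref{dUf_dC} contains an indirect market-level term of the form $-\frac{\rho^f}{e^f+\lambda}\,\frac{\partial\lambda}{\partial C^f}$. Controlling the sign and magnitude of this feedback --- equivalently, verifying that the scalar map $\lambda\mapsto\widehat\lambda$ (or the full vector best-response map) is strictly monotone or a contraction --- is where the real work lies; existence and the slack regime are essentially immediate from Lemmas~\ref{lemma:prop}, \ref{lemma:u_f_sym_conv}, and~\ref{lemma:br_bound}.
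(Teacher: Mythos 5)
Your existence argument and your treatment of the slack regime line up with the paper. For existence the paper applies a fixed-point argument directly to the best-response map $\Phi(\mathbf{C})=\{\Upsilon^f(\mathbf{C}^{-f})\}$ using its continuity and the bound of Lemma~\ref{lemma:br_bound}; your Debreu--Glicksberg--Fan route through Lemma~\ref{lemma:u_f_sym_conv} is an equivalent (and if anything cleaner) way to get the same conclusion. In the slack case your first-order condition is the paper's equation~\eqref{contrator_1} written in the variable $C^f$ instead of $\rho^f$; the monotonicity argument is the same.

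The genuine gap is in the binding regime, and you have named it yourself: your entire uniqueness argument there rests on the claim that the market-clearing map $\lambda\mapsto\widehat\lambda$ is strictly monotone, and you never prove it --- you only say you ``would show'' it and that controlling the feedback term $-\frac{\rho^f}{e^f+\lambda}\frac{\partial\lambda}{\partial C^f}$ ``is where the real work lies.'' As written, the proposal therefore establishes existence and uniqueness only when $\sum_f\rho^f<B$. The paper closes this case differently: rather than introducing a scalar fixed point in $\lambda$, it differentiates the inverse map $C^f=\Psi^{-1}(\rho^f)$ along the equilibrium manifold (this is already done in the proof of Lemma~\ref{lemma:prop}, where the dependence of $\lambda$ and of the other flows' quantities on $\rho^f$ is worked out explicitly, yielding $\frac{dC^f}{d\rho^f}=(\lambda+e^f)\frac{R^2}{R-1}+\rho^f\sum_{f'\ne f}\frac{C^{f'}}{(\rho^{f'})^2}$) and then writes the leader's first-order condition as equation~\eqref{contrator_2}. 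The left-hand side is increasing in $\rho^f$ because $\lambda$ is nondecreasing and each $\rho^{f'}$ is decreasing as $\rho^f$ grows (both inherited from Lemma~\ref{lemma:prop}), while the right-hand side is decreasing; hence at most one solution. If you want to keep your $\lambda$-fixed-point architecture you must actually supply the monotonicity of $\lambda\mapsto\widehat\lambda$; otherwise the shorter path is to reuse the derivative computations already embedded in Lemma~\ref{lemma:prop}, which is exactly what the paper does.
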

\begin{proof}
Please refer to Appendix~\ref{apdix:ex_uq}.
\end{proof}

Thus far, we have obtained the static characteristics of the leaders' game, i.e., the existence and uniqueness of the equilibrium. Next, we analyze the dynamic behavior of the leaders' game, i.e., how the game converges to the equilibrium from any initial strategy profile by best-response strategy updates. Before delving into the convergence analysis, we discuss the monotonicity of the best response function $\Upsilon(\mathbf{C}^{-f})$ of flow $f$.

\begin{lemma}
\label{lemma:br_mono}
The best response $\Upsilon(\mathbf{C}^{-f})$ of flow $f$ is monotonic and non-decreasing in $C^{f'}$ for any $f'\in \mathcal{F}\setminus\{f\}$.
\end{lemma}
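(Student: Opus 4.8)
The plan is to prove this by monotone comparative statics. The goal is: for fixed values of the remaining components of $\mathbf{C}^{-f}$, the maximizer of $C^f \mapsto U_f(C^f,\mathbf{C}^{-f})$ is non-decreasing in $C^{f'}$. By Topkis's monotonicity theorem, together with the fact that this maximizer is \emph{unique} (which follows from the concavity of $U_f$ in $C^f$, Lemma~\ref{lemma:u_f_sym_conv}), it suffices to show that $U_f$ has increasing differences in the pair $(C^f,C^{f'})$, i.e.\ (using the smoothness already invoked in~\eqref{dUf_dC}) that $\frac{\partial^2 U_f}{\partial C^f\,\partial C^{f'}}\ge 0$. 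Since $U_f = u_f\!\big(R\log(1+\rho^f)\big) - C^f$ and the $-C^f$ term contributes nothing to a cross derivative, this reduces to showing that $u_f\!\big(R\log(1+\rho^f)\big)$ has increasing differences in $(C^f,C^{f'})$.

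Before the core argument I would assemble the structural facts about the followers' symmetric equilibrium. From Theorem~\ref{th:sym_exist} and the convex program~\eqref{op_node_convex}, the equilibrium satisfies $\rho^g = \frac{(R-1)C^g}{R^2(e^g+\lambda)}$ with a single multiplier $\lambda=\Lambda(\mathbf{C})\ge 0$, so $\rho^f$ depends on $\mathbf{C}^{-f}$ only through $\lambda$. Next, $\lambda=\Lambda(\mathbf{C})$ is non-decreasing in every $C^{f'}$: this is the exact counterpart of Lemma~\ref{lemma:prop}(1) in the $C^{f'}$ direction, obtained by implicit differentiation of $\sum_g\rho^g = B$ when the cap binds and trivially when it is slack. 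Combining, at a fixed $C^f$ an increase in $C^{f'}$ raises $\lambda$ and hence lowers $\rho^f$; and by Lemma~\ref{lemma:prop}(2)--(3), $\rho^f=\Psi(C^f)$ is increasing and concave in $C^f$.

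I would then differentiate the marginal incentive $\frac{\partial U_f}{\partial C^f} = u'_f\!\big(R\log(1+\rho^f)\big)\frac{R}{1+\rho^f}\frac{\partial\rho^f}{\partial C^f} - 1$ from~\eqref{dUf_dC} with respect to $C^{f'}$. The prefactor $u'_f\!\big(R\log(1+\rho^f)\big)\frac{R}{1+\rho^f}$ is a decreasing function of $\rho^f$ (the concave increasing $u_f$ composed with $R\log(1+\cdot)$, times $R/(1+\rho^f)$), so since raising $C^{f'}$ lowers $\rho^f$, this prefactor increases --- being crowded out raises flow $f$'s marginal utility from offloading, pushing its willingness-to-pay up. The delicate piece is the second factor $\frac{\partial\rho^f}{\partial C^f}$, the marginal effectiveness of flow $f$'s own payment: one must compute $\frac{\partial^2\rho^f}{\partial C^f\,\partial C^{f'}}$ by differentiating the followers' KKT system a second time, tracking how the induced change in $\lambda$ (and in $\partial\lambda/\partial C^f$) feeds back into $\rho^f$. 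I expect this to be the main obstacle --- one has to show that the ``crowding raises marginal utility'' effect is not overturned by a drop in payment effectiveness, and this is precisely where the argument must exploit the explicit $\frac{R-1}{R^2(e^g+\lambda)}$ form of the equilibrium map rather than generic concavity. A route that avoids the second derivatives, and is robust to the kink at the boundary where the cap starts binding, is to reparametrize flow $f$'s decision by $\rho^f$ --- legitimate because $\Psi(\cdot;\mathbf{C}^{-f})$ is a strictly increasing bijection by Lemma~\ref{lemma:prop}(2) --- writing the payment as a price $C^f = P(\rho^f,\mathbf{C}^{-f})$, then showing $P$ has decreasing differences in $(\rho^f,C^{f'})$ (which yields monotonicity of the optimal $\rho^f$) while $P$ is itself non-decreasing in $C^{f'}$ (which transfers that monotonicity to $\Upsilon^f = P(\hat\rho^f,\mathbf{C}^{-f})$).
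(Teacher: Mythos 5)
Your high-level framing --- that the lemma is a monotone-comparative-statics statement, equivalent (given the concavity of $U_f$ in $C^f$ from Lemma~\ref{lemma:u_f_sym_conv}) to showing $\frac{\partial U_f}{\partial C^f}$ is non-decreasing in $C^{f'}$ --- is sound, and your structural facts about the followers' equilibrium ($\rho^g=\frac{(R-1)C^g}{R^2(e^g+\lambda)}$, $\Lambda$ non-decreasing in every price) are correct. But the proposal has a genuine gap exactly where you flag it: the factor $\frac{\partial\rho^f}{\partial C^f}$ in~\eqref{dUf_dC}. You show the prefactor $u_f'\big(R\log(1+\rho^f)\big)\frac{R}{1+\rho^f}$ rises when $C^{f'}$ rises, but you never show the payment-effectiveness term does not fall faster; you explicitly leave that as ``the main obstacle,'' so the proof is not complete. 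Worse, the fallback route you sketch goes the wrong way: by~\eqref{C_d_p_l1}, $\frac{\partial C^f}{\partial\rho^f}=(\lambda+e^f)\frac{R^2}{R-1}+\rho^f\sum_{f'\neq f}\frac{C^{f'}}{(\rho^{f'})^2}$, and since $\Lambda$ is non-decreasing in $C^{f'}$ this is \emph{increasing} in $C^{f'}$ at fixed $\rho^f$. Hence the inverse map $P(\rho^f,\mathbf{C}^{-f})$ has \emph{increasing}, not decreasing, differences in $(\rho^f,C^{f'})$; Topkis in the $\rho^f$-parametrization then gives an optimal $\hat\rho^f$ that is non-increasing in $C^{f'}$, and the two effects on $\Upsilon^f=P(\hat\rho^f,\mathbf{C}^{-f})$ pull in opposite directions. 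Neither of your two routes, as stated, closes the argument.

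The paper avoids signing any cross-partial by a direct contradiction at the level of equilibrium values. Supposing $C^{f'}_a<C^{f'}_b$ but $C^f_a>C^f_b$ (with the best responses interior, the boundary case being trivial), it introduces the followers' equilibrium at the mixed price vector $(C^f_b,\mathbf{C}^{-f}_a)$ and uses only the first-order facts of Lemma~\ref{lemma:prop}: monotonicity of $\Lambda$ in the competitor's price gives $\rho^f_{ba}>\rho^f_b$, monotonicity of $\Psi$ in the own price gives $\rho^f_a>\rho^f_{ba}$, and then the chain
$0=\frac{\partial U_f}{\partial C^f}\big|_{\rho^f_b,C^f_b}>\frac{\partial U_f}{\partial C^f}\big|_{\rho^f_{ba},C^f_b}>\frac{\partial U_f}{\partial C^f}\big|_{\rho^f_a,C^f_b}>\frac{\partial U_f}{\partial C^f}\big|_{\rho^f_a,C^f_a}=0$,
justified by the marginal utility being decreasing separately in $\rho^f$ and in $C^f$, yields the contradiction. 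If you want to salvage your approach, emulate this: compare first-order conditions at three concrete equilibrium points rather than attempting to control $\frac{\partial^2\rho^f}{\partial C^f\partial C^{f'}}$ directly.
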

\begin{proof}
Please refer to Appendix~\ref{apdix:br_mono}.
\end{proof}

Now, we are ready to state the following theorem which characterizes the dynamic behavior of the leaders' game.
\begin{theorem}
\label{th:conv}
Given some initial price vector $\mathbf{C}(0)$, if each flow $f$ responds according to Algorithm~\ref{ag:SGL_sym}, where $\Big(\rho^1(n),\cdots, \rho^{|\mathcal{F}|}(n)\Big)$ can be obtained by solving~\eqref{op_node_convex}, that is, flow $f \in \mathcal{F}$ updates its strategy as $C^f(n+1)=\Upsilon(\mathbf{C}^{-f}(n))$,
then $\lim_{n\rightarrow\infty}\mathbf{C}(n)=\mathbf{C}^*$, where $\mathbf{C}^*$ is the equilibrium of the leaders' game.
\end{theorem}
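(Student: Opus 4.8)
The plan is to recognize the leaders' game as a supermodular game and to sandwich an arbitrary best-response trajectory between two monotone, hence convergent, trajectories that both converge to the unique equilibrium. Set $\overline{C}^f\triangleq u_f\big(R\log(1+B)\big)$, a finite positive number, and let $\mathcal{B}\triangleq\prod_{f\in\mathcal{F}}[0,\overline{C}^f]$, equipped with the componentwise order. Define the simultaneous best-response map $\mathbf{T}$ by $\big(\mathbf{T}(\mathbf{C})\big)^f\triangleq\Upsilon(\mathbf{C}^{-f})$, so that Algorithm~\ref{ag:SGL_sym} is exactly the iteration $\mathbf{C}(n+1)=\mathbf{T}(\mathbf{C}(n))$. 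By Lemma~\ref{lemma:br_bound}, $\mathbf{T}$ maps the entire nonnegative orthant into $\mathcal{B}$; consequently $\mathbf{C}(n)\in\mathcal{B}$ for all $n\ge 1$, so after discarding the first term it costs nothing to assume $\mathbf{C}(0)\in\mathcal{B}$.

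First I would check that $\mathbf{T}$ is order-preserving and continuous on $\mathcal{B}$. Order preservation is immediate: if $\mathbf{C}\le\mathbf{C}'$ then $\mathbf{C}^{-f}\le(\mathbf{C}')^{-f}$ for every $f$, and Lemma~\ref{lemma:br_mono} gives $\Upsilon(\mathbf{C}^{-f})\le\Upsilon((\mathbf{C}')^{-f})$, hence $\mathbf{T}(\mathbf{C})\le\mathbf{T}(\mathbf{C}')$; here the \emph{simultaneous} update causes no difficulty precisely because the $f$-th coordinate $\Upsilon(\mathbf{C}^{-f})$ does not depend on $C^f$. Continuity of $\mathbf{T}$ follows from the continuity of each best-response function $\Upsilon$, which—as observed in the text just before Theorem~\ref{th:ex_uq}—holds because $u_f$ is continuously differentiable and, by Lemma~\ref{lemma:u_f_sym_conv}, the maximizer of $U_f(\cdot,\mathbf{C}^{-f})$ is unique.

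Next I would iterate $\mathbf{T}$ from the two extreme points of $\mathcal{B}$. Starting from the bottom $\mathbf{C}^{\mathrm{lo}}(0)=\mathbf{0}$ we have $\mathbf{C}^{\mathrm{lo}}(1)=\mathbf{T}(\mathbf{0})\ge\mathbf{0}$, and applying the order-preserving map inductively shows $\{\mathbf{C}^{\mathrm{lo}}(n)\}$ is componentwise non-decreasing; being bounded above by $\overline{\mathbf{C}}\triangleq(\overline{C}^1,\dots,\overline{C}^{|\mathcal{F}|})$, it converges to some $\mathbf{C}^{\mathrm{lo}}_{\infty}\in\mathcal{B}$, and passing to the limit in $\mathbf{C}^{\mathrm{lo}}(n+1)=\mathbf{T}(\mathbf{C}^{\mathrm{lo}}(n))$ using continuity of $\mathbf{T}$ gives $\mathbf{C}^{\mathrm{lo}}_{\infty}=\mathbf{T}(\mathbf{C}^{\mathrm{lo}}_{\infty})$, i.e., $\mathbf{C}^{\mathrm{lo}}_{\infty}$ is an equilibrium of the leaders' game. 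Symmetrically, starting from the top $\mathbf{C}^{\mathrm{hi}}(0)=\overline{\mathbf{C}}$, Lemma~\ref{lemma:br_bound} gives $\mathbf{T}(\overline{\mathbf{C}})\le\overline{\mathbf{C}}$, so $\{\mathbf{C}^{\mathrm{hi}}(n)\}$ is non-increasing, bounded below by $\mathbf{0}$, and converges to an equilibrium $\mathbf{C}^{\mathrm{hi}}_{\infty}$. By Theorem~\ref{th:ex_uq} the equilibrium is unique, so $\mathbf{C}^{\mathrm{lo}}_{\infty}=\mathbf{C}^{\mathrm{hi}}_{\infty}=\mathbf{C}^*$. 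Finally, since $\mathbf{0}\le\mathbf{C}(0)\le\overline{\mathbf{C}}$, applying $\mathbf{T}$ repeatedly and using order preservation gives $\mathbf{C}^{\mathrm{lo}}(n)\le\mathbf{C}(n)\le\mathbf{C}^{\mathrm{hi}}(n)$ for every $n$; both envelopes converge to $\mathbf{C}^*$, so by the squeeze argument $\lim_{n\to\infty}\mathbf{C}(n)=\mathbf{C}^*$.

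I expect the main obstacle to be the limit-passage inside $\mathbf{T}$—this is where the smoothness of $u_f$ and the single-valuedness of $\Upsilon$ noted before Theorem~\ref{th:ex_uq} are essential—together with the bookkeeping that the simultaneous (rather than round-robin) update is still order-preserving and stays in $\mathcal{B}$; once these are in place, the remainder is the classical monotone-iteration/sandwich scheme for supermodular games, resting only on the elementary fact that monotone bounded sequences in $\mathbb{R}^{|\mathcal{F}|}$ converge.
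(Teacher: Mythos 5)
Your proposal is correct and follows essentially the same route as the paper's proof: a monotone best-response iteration from the bottom and top of the strategy box (using Lemma~\ref{lemma:br_mono} for order preservation, Lemma~\ref{lemma:br_bound} for boundedness, and continuity plus uniqueness of the fixed point to identify the limits), followed by the sandwich argument for an arbitrary initial vector. The only cosmetic differences are that you phrase it via the simultaneous update map $\mathbf{T}$ and start the lower envelope at $\mathbf{0}$ rather than at $(\delta,\dots,\delta)$ with $\delta\rightarrow 0^{+}$, which does not change the substance of the argument.
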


\begin{proof}
Please refer to Appendix~\ref{apdix:conv}.
\end{proof}

Distributed Algorithm~\ref{ag:SGL_sym} computes the price $C^f(n+1)$ of flow $f$ ($f\in \mathcal{F}$) at $n+1$,  where the price $C^f(n+1)$ of flow $f$ depends on $\Big(\rho^1(n),\cdots, \rho^{|\mathcal{F}|}(n)\Big)$ rather than the price of other flows, i.e., $C^{f'}(n)$ ($f'\neq f$).

\begin{algorithm}
\caption{Computing price for flow $f$}
\begin{algorithmic}[1]\label{ag:SGL_sym}
\STATE  \textbf{input:} $\rho^1(n),\cdots, \rho^{|\mathcal{F}|}(n)$;
        \IF { flow $f\in \mathcal{F}$ updates its strategy}
        \IF {$\rho^f(n) + \sum_{f'\ne f}\rho^{f'}(n) <B$}
            \STATE $C^f(n+1)= u'_f(R\log(1+\rho^f(n)))\frac{R\rho^f(n)}{1+\rho^f(n)}$;
        \ELSE
            \STATE $\lambda=\Big[\frac{u'_f(R\log(1+\rho^f(n)))}{\rho^f(n)(1+\rho^f(n))}\frac{R-1}{R}-\sum_{f'\in \mathcal{F}}\frac{e^{f'}}{\rho^{f'}(n)}\Big] \frac{1}{\sum_{f'\in \mathcal{F}}\frac{1}{\rho^{f'}(n)}}$;
            \STATE $C^f(n+1)=\rho^f(n)(\lambda+e^f)\frac{R^2}{R-1}$ for flow $f$;
        \ENDIF
    \ENDIF
\STATE \textbf{ouput:} {$C^f(n+1)$}.
\end{algorithmic}
\end{algorithm}

In this section, when the capacity of APs is limited, by considering a symmetric strategy profile, we have established the existence and uniqueness of the equilibrium of the corresponding MLMF-SG, and further, based on the best response strategy, presented a distributed price algorithm that allows the flows to computer their price independently.

\section{Numerical Simulation}
\label{sec:sim}

In this section, we demonstrate some of the theoretical results derived in this paper, and gain further insight into the behavior of the game for different scenarios via a numerical study. Our goal is to present several scenarios indicative of the typical interactions among the players in the game. First, we consider the case when the offloading capacity of the APs is not limited. Specifically, we evaluate the effect of the offloading cost, heterogeneity of traffics, the number of APs etc, on the performance of the equilibrium. In the second part of this section, we evaluate the performance of the game when the offloading capacity of APs is limited.

\subsection{Multiple Cellular Flows and Multiple APs with Offloading Capacity Limit}

First, we introduce the price of anarchy (PoA). Denote the unique equilibrium of the proposed MLMF-SG as $(\mathbf{r}^*_{ne}, \mathbf{C}^*_{ne})$, we know that $(\mathbf{r}^*_{ne}, \mathbf{C}^*_{ne})$ can be obtained by solving problem~\eqref{op_node_convex} and running Algorithm~\ref{ag:SGL_sym}, and the optimum system utility $U_{NE}$ at equilibrium is a function of $(\mathbf{r}^*_{ne}, \mathbf{C}^*_{ne})$, i.e., $U_{NE}=\Big[\sum_{f\in \mathcal{F}} U_f  + \sum_{i\in \mathcal{R}} V_i\Big]_{(\mathbf{r}, \mathbf{C})=(\mathbf{r}^*_{ne}, \mathbf{C}^*_{ne})} $. On the other hand, the social utility $U_{Opt}$ can be obtained by solving the following optimization problem,
\begin{align*}
   \max_{\rho^1\cdots \rho^{|\mathcal{F}|}} \Big\{U_s \triangleq\sum_{f\in \mathcal{F}} u_f\Big(R \log (1+\rho^f)\Big) - \sum_{f\in \mathcal{F}} R  e^f \rho^f \Big\} ~~s.t. ~~ \sum_{f\in \mathcal{F}} \rho^f \leq B, ~ \rho^f > 0, ~ f\in \mathcal{F}.
\end{align*}
Denote $\mathbf{\rho}^*=\argmax_{\rho^1\cdots \rho^{|\mathcal{F}|}}\{U_s\}$, and then, $U_{Opt}=U_s\big|_{\mathbf{\rho}^*}$. Therefore, PoA$=\frac{U_{Opt}}{U_{NE}}$.


\subsubsection{Convergence}
We first consider the simplest scenario with two cellular traffic flows $|\mathcal{F}|=2$  and two APs $|\mathcal{R}|=2$, which allows us to illustrate the interactions between flows and APs. Specifically, for the cellular traffic flow $f\in \mathcal{F}$, we adopt a linear utility function $U_f=\omega_f \sum_{i\in \mathcal{R}} \log (1+r^f_i)-C^f$. The parameters are set as follows: offloading costs $e^1=0.1$ and $e^2=0.3$, weight coefficients $w_1=1$ and $w_2=2$, and capacity limit $B=7$ in Fig.~\ref{fig:convergence}(a) and $B=1$ in Figs.~\ref{fig:convergence}(b)--(d), respectively. By solving problem~\eqref{op_node_convex}, 
we obtain $\rho^1=4$ and $\rho^2=2.33$, and further, $C^1=1.6$ and $C^2=2.8$ from $C^f= e^f\rho^f\frac{ R^2}{R-1}$ according to Algorithm~\ref{ag:SGL_sym}. Note that $\rho^1+\rho^2= 6.33<7$ implying that the condition $\rho^1+\rho^2<B$ holds, which is shown in Fig.~\ref{fig:convergence}(a).
On the other hand, for $B=1$, $\rho^1+\rho^2=1$ must be satisfied at the NE, which is illustrated in Figs.~\ref{fig:convergence}(b)--(d). Moreover, we observe from Figs.~\ref{fig:convergence}(b)--(d) that the price vector and the strategy profile converge from different initial price vectors $\mathbf{C}(0)=(0.01,0.01)$, $\mathbf{C}(0)=(5, 0.01)$, and $\mathbf{C}(0)=(10,10)$, respectively, which validates the proposed Algorithm~\ref{ag:SGL_sym}.

\begin{figure}[ht]
\begin{center}
\begin{tabular}{cc}
\includegraphics[scale=0.4]{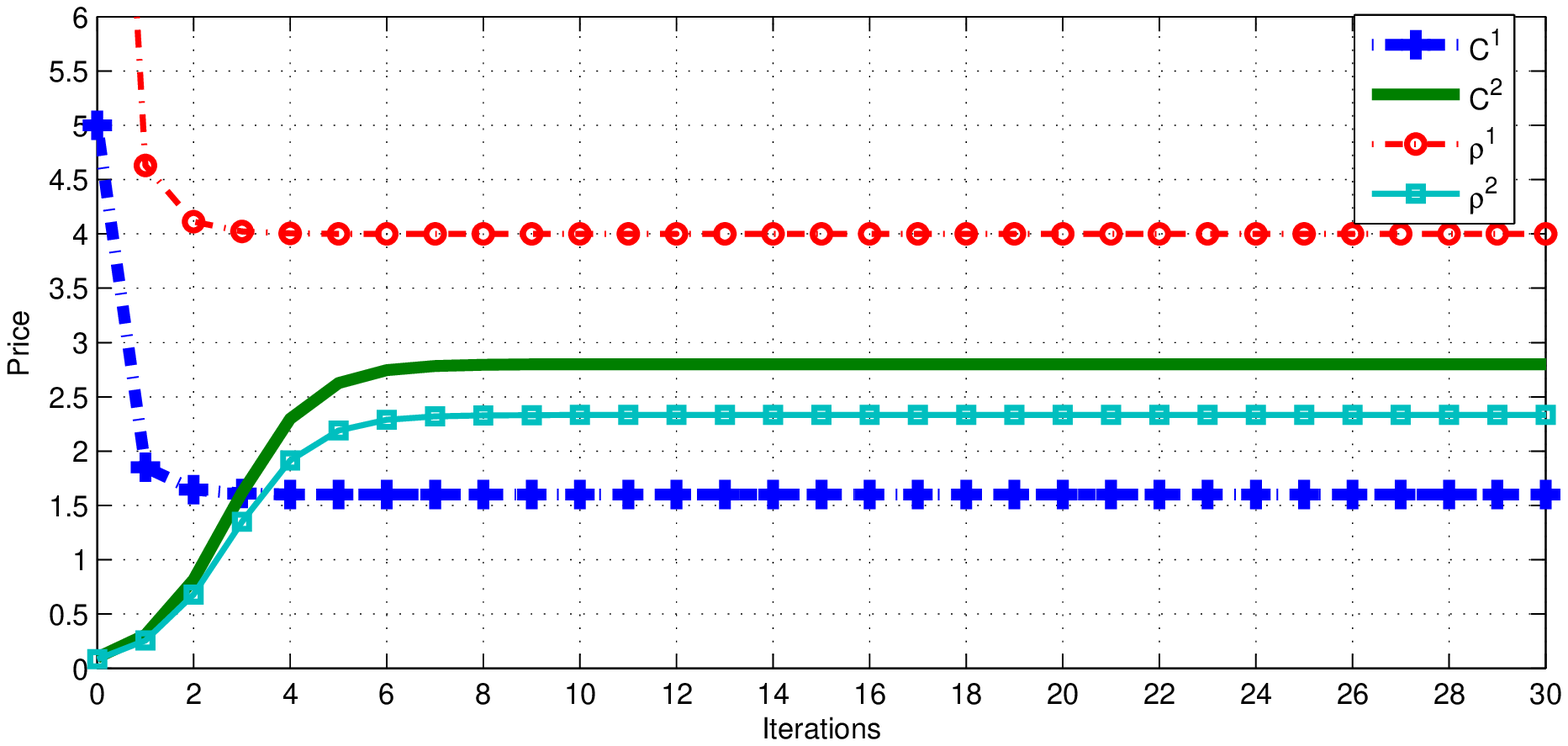} & \includegraphics[scale=0.4]{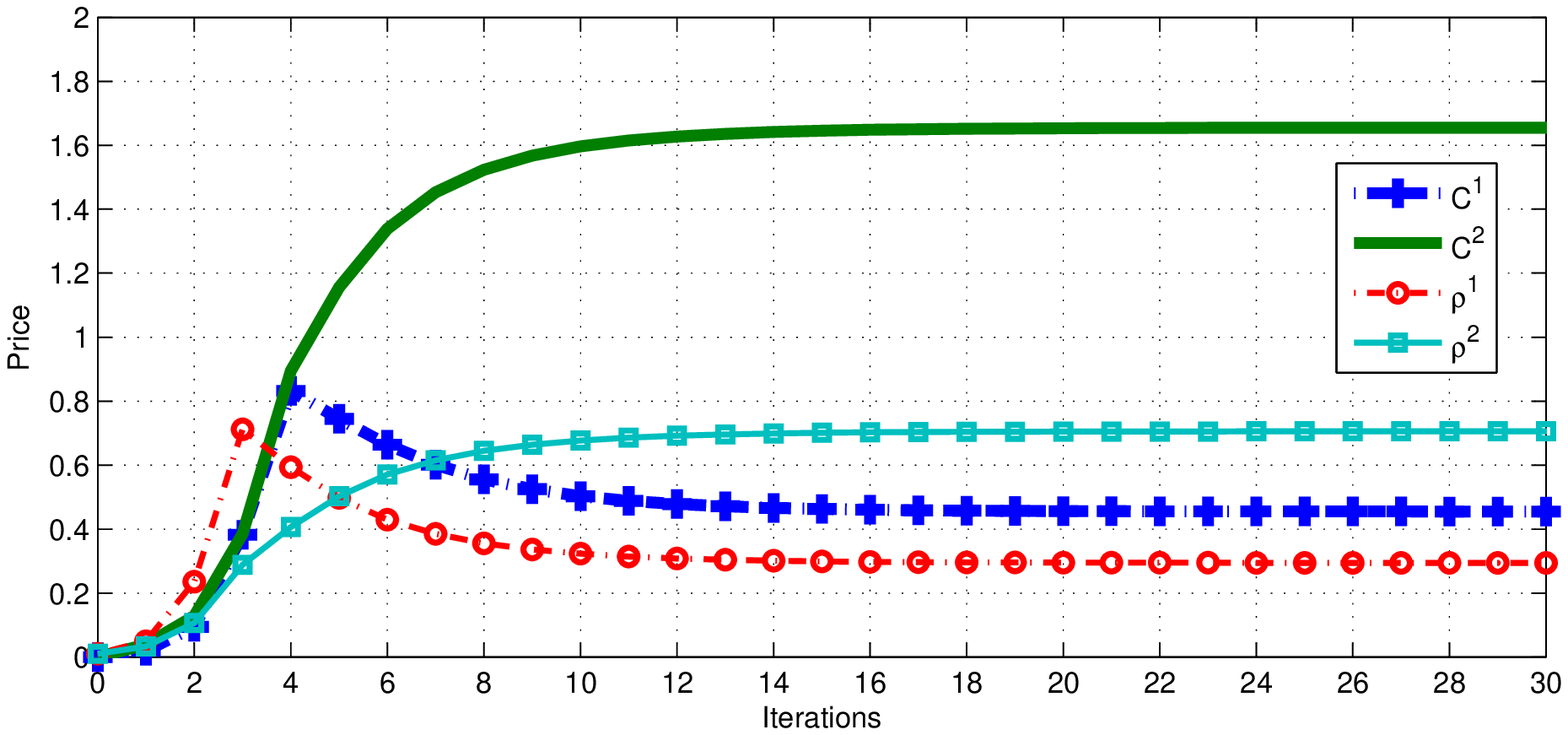}\\
(a) & (b)\\
\includegraphics[scale=0.4]{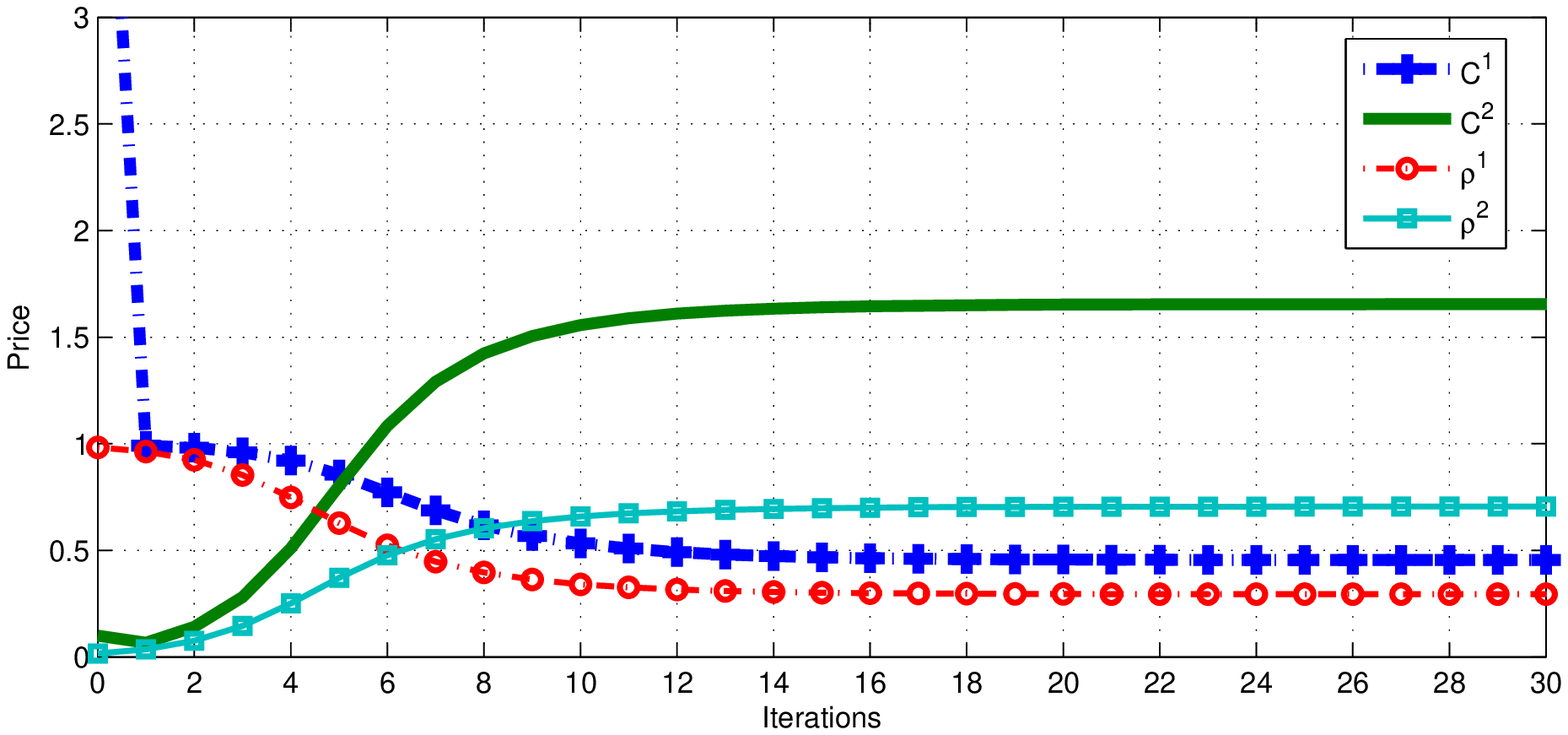} & \includegraphics[scale=0.4]{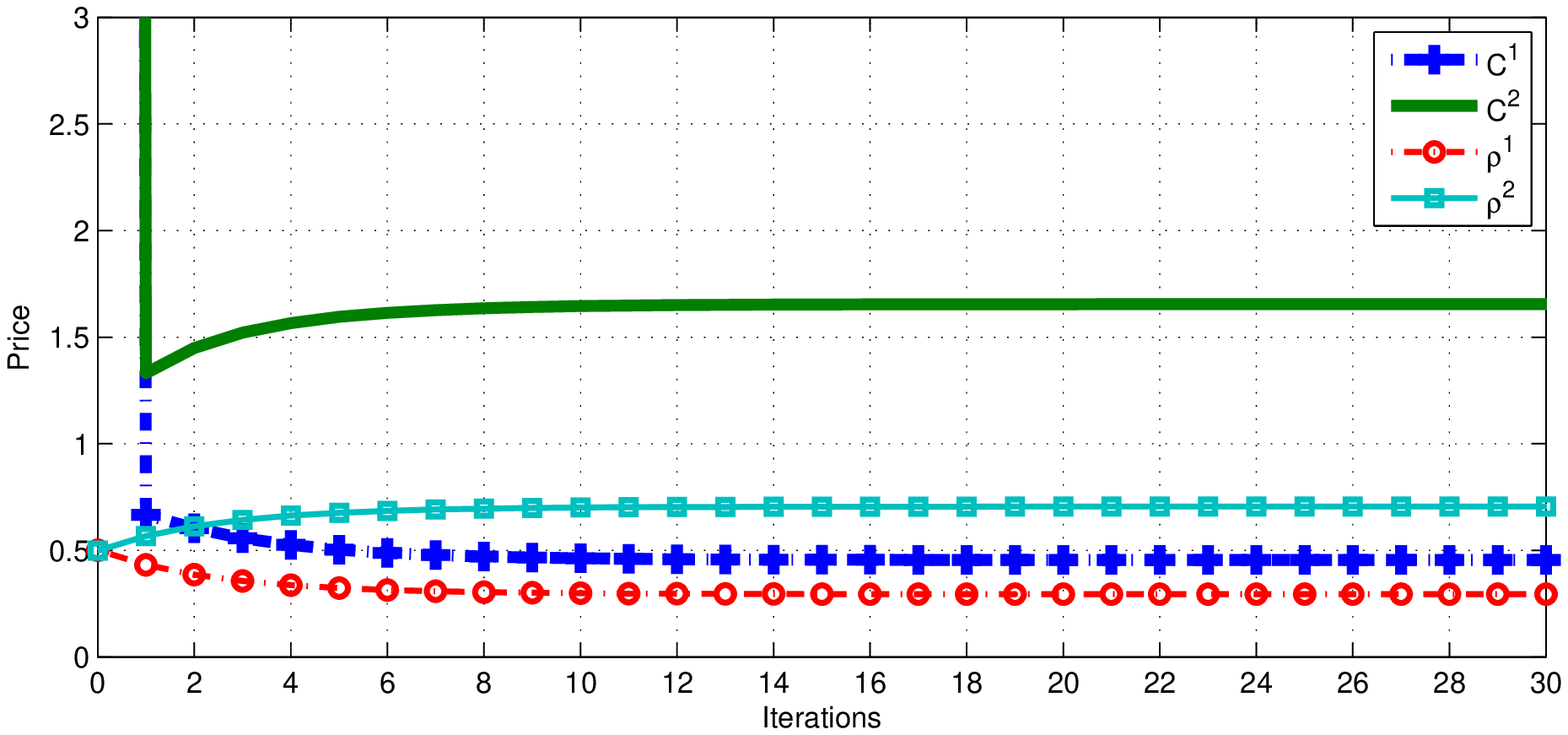}\\
(c) & (d)\\
\end{tabular}
\end{center}
\caption{Convergence of price vector and policy vector to NE: $e^1=0.1$, $e^2=0.3$, $w_1=1$, $w_2=2$, and $B=7$ in (a), and $B=1$ in (b)--(d). For Figs.~\ref{fig:convergence} (b)--(d), different initial price vectors $\mathbf{C}(0)=(0.01,0.01)$, $\mathbf{C}(0)=(5, 0.01)$, and $\mathbf{C}(0)=(10,10)$, are used, respectively. }
\label{fig:convergence}
\end{figure}
%
%
%
%

\subsubsection{Offloading Cost}
Considering two cellular traffics $|\mathcal{F}|=2$, two APs $|\mathcal{R}|=2$ and two linear utility functions with $w_1=2$ and $w_2=1$, respectively, we fix the offloading capacity to $B=2$ and the offloading cost of AP 1 to $e^1=0.5$, and then show how the price vector and strategy profile change as function of the offloading cost $e^2$ of AP 2. From Fig.~\ref{fig:flowcost2}, we make the following observations.

(a) $e^2\in(0.1,0.25]$. Assume $\rho^1 + \rho^2 <B$, then according to~\eqref{contrator_1}, we have $\rho^f=\frac{w_f}{e^f}\frac{R-1}{R}-1$ (which is decreasing in $e^f$), and furthermore, $\rho^1|_{e^1=0.5} =1$ and $\rho^2|_{e^2=0.1}=4$, which implies that $\rho^1|_{e^1=0.5} + \rho^2|_{e^2=0.1}=5>B=2$ contradicts the assumption $\rho^1 + \rho^2 <B$. Thus, $\rho^1+\rho^2=B=2$ must be satisfied for $e^2\in(0.1,0.25]$, which is shown in Fig.~\ref{fig:flowcost2};

(b) $e^2\in(0.25,0.5]$. The condition $\rho^1+\rho^2\leq B=2$ is met with equality until $e^2=0.25$. That is, when $e^2=0.25$, $\rho^2|_{e^2=0.25}=1$ and $\rho^1|_{e^1=0.5} + \rho^2|_{e^2=0.25}=2=B$. Hence, if $e^2>0.25$, then the condition $\rho^1+\rho^2 \leq B=2$ is no longer met with equality and $\rho^1+\rho^2<B$, which leads to $\rho^1=1$ and $C^1=w_1 R \frac{\rho^1}{1+\rho^1}=2$ from Algorithm~\ref{ag:SGL_sym}. Also, as $e^2$ increases, $\rho^2=\frac{w_2}{e^2}\frac{R-1}{R}-1$ decreases, which can be observed in Fig.~\ref{fig:flowcost2};

(c) $e^2\in(0.5,1]$. When $e^2\geq0.5$, $\rho^2=\frac{w_2}{e^2}\frac{R-1}{R}-1=\frac{1}{2 e^2}-1\leq0$, which implies that AP 2 does not offload any data and accordingly, $\rho^2 = 0$ and $C^2 = 0$.
\begin{figure}
\centering
\includegraphics[width=10cm, height=4cm]{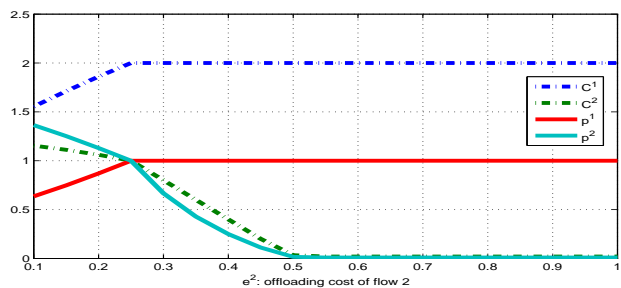}
\caption{The impact of offloading cost on the price vector and strategy profile. $w_1=2$, $w_2=1$, $B=2$, and $e^1=0.5$.}
\label{fig:flowcost2}
\end{figure}

\subsubsection{Heterogeneity of Data Traffic Flows}
For $|\mathcal{F}|=2$, $|\mathcal{R}|=2$, $w_1=1$, $e^1=0.1$, $e^2=0.3$, and $B=1$, Fig.~\ref{fig:PoA_w2} illustrates the relation between $w_2$ and the price of anarchy (PoA), defined as the ratio between the optimal social utility $U_{Opt}$ and the system utility $U_{NE}$ achieved at NE. In particular, the peak at $w_2=1$ can be explained as follows. When $w_1=w_2$, the two traffic flows are homogeneous. In this case, the APs prefer to offload the traffic flow $f=1$ because of its lower offloading cost (i.e., $e^1=0.1<0.3=e^2$), and flow $f=2$ cannot be treated equally, which leads to the peak of PoA at $w_1=w_2$. On the other hand, as the difference between $w_1$ and $w_2$ increases, corresponding to a larger heterogeneity of traffic flows, the NE strategy requires that the APs participate in pricing fully and offload each traffic flow as much as possible, and thus, PoA $\rightarrow1$.
\begin{figure}
\centering
\includegraphics[scale=0.5]{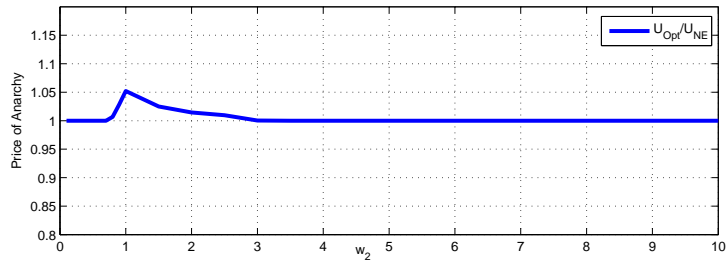}
\caption{The impact of the heterogeneity of traffic flows on PoA. $|\mathcal{F}|=2$, $|\mathcal{R}|=2$, $w_1=1$, $e^1=0.1$, $e^2=0.3$, and $B=1$.}
\label{fig:PoA_w2}
\end{figure}

\subsubsection{Number of APs}
We consider two data traffic flows and multiple APs with a linear utility function for each traffic flow, i.e., $u_f(\sum_{i\in \mathcal{R}} \log (1+r^f_i))= w_f \sum_{i\in \mathcal{R}} \log (1+r^f_i)$ where $w_1=2$ and $w_2=3$. Fig.~\ref{fig:MulRelayeps} reveals that the PoA decreases with the number of APs. Since the APs have the same offloading cost for each traffic flow, they equally and fully participate in the pricing process of each traffic flow, and thus the equilibrium utility $U_{NE}$ becomes much closer to the social utility $U_{Opt}$ as the increasing number of APs.
\begin{figure}
\centering
\includegraphics[scale=0.5]{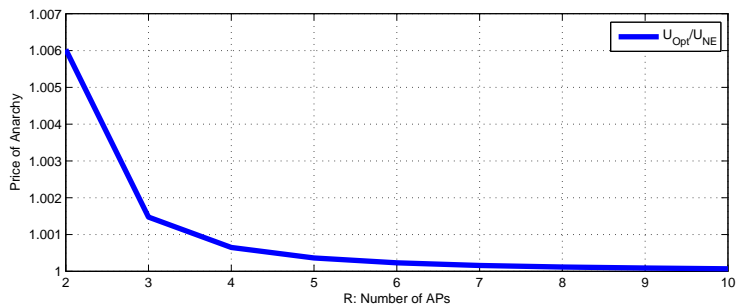}
\caption{The impact of the number of APs on PoA. $w_1=2$, $w_2=3$,  $e^1_i =0.1$. and $e^2_i =0.2$ for any $i \in \mathcal{R}$.}
\label{fig:MulRelayeps}
\end{figure}

\subsubsection{Large System}

To demonstrate the asymptotic properties of the game in a large-scale symmetric scenario, we consider 10 APs and 3 cellular traffic flows with $e^1=0.1$, $e^2=0.3$, $e^3=0.2$, and $w_f=1$, $f\in \mathcal{F}$, for more complex utility functions, namely, power-law functions $u_f(x)=w_f x^b,~0<b<1$, and the logarithmic function $u_f(x)=w_f \log(1+x)$. Fig.~\ref{fig:Price_diffunc} shows the PoA for the considered utility functions when the number of APs increases from $2$ to $10$. Specifically, for the logarithmic utility function as well as the linear function, the PoA decreases with increasing number of APs, while the PoA increases for the power-law utility functions. Moreover, the PoA increases as the  value of $b$ decreases (actually the linear function can be seen as a power-law function with $b=1$). Fig.~\ref{fig:Price_diffunc} suggests that the proposed framework can achieve an efficient equilibrium with at most $15\%$ loss of system utility compared to the optimum system utility when $3\leq R\leq 10$.
\begin{figure}
\centering
\includegraphics[scale=0.5]{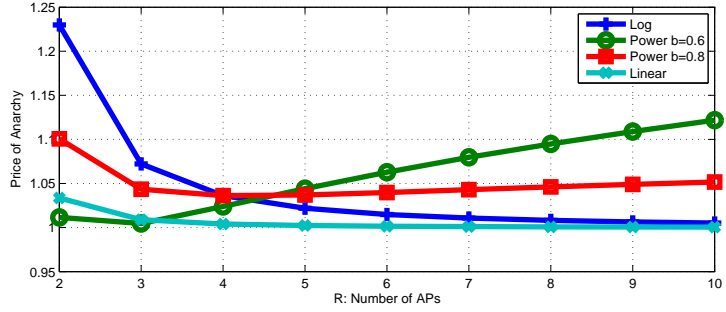}
\caption{Price of anarchy for log, power-law, and linear functions. $e^1=0.1$, $e^2=0.3$, $e^3=0.2$, and $w_1=w_2=w_3=1$.}
\label{fig:Price_diffunc}
\end{figure}

\subsection{Multiple Cellular Flows and Multiple APs without Offloading Capacity Limit}

In this subsection, for the case where the offloading capacity of the APs is not limited, we analyze how the PoA is affected by different parameters, i.e., the offloading cost, heterogeneity of traffic, the number of APs, and the number of traffic flows. First, we note that the optimum system utility, $U_{NE}$, at equilibrium is a function of $\mathbf{C}^*$ which is determined by the leaders' utility functions. Specifically, from $\frac{\partial U_f}{\partial C^f} = u'_f\Big(\sum_{i\in \mathcal{R}} \log (1+k^f_i C^f)\Big) \sum_{i\in \mathcal{R}} \frac{k^f_i}{1+k^f_i C^f}-1=0,~f\in\mathcal{F}$, we can obtain $\mathbf{C}^*$, and further, $U_{NE}=\Big[\sum_{f\in \mathcal{F}} u_f\Big(\sum_{i\in \mathcal{R}} \log (1+k^f_i C^f)\Big) - \sum_{f\in \mathcal{F}} \sum_{i\in \mathcal{R}}  e^f_i k^f_i C^f\Big]_{\mathbf{C}=\mathbf{C}^*} $ which shows that $U_{NE}$ is determined by the leaders' price vector. On the other hand, the social utility $U_{Opt}$ is the maximum of $U_s =\sum_{f\in \mathcal{F}} u_f\Big(\sum_{i\in \mathcal{R}} \log (1+r^f_i)\Big) - \sum_{f\in \mathcal{F}}\sum_{i\in \mathcal{R}}  e^f_i r^f_i$. Hence, from $\frac{\partial U_s}{\partial r^f_i} = u'_f\Big(\sum_{i\in \mathcal{R}} \log (1+r^f_i)\Big) \sum_{i\in \mathcal{R}} \frac{1}{1+r^f_i} - e^f_i =0,~i\in \mathcal{R},~f\in \mathcal{F} $, we have the optimum $\mathbf{r}^*$ and further $U_{Opt}=U_s\big|_{\mathbf{r}^*}$ which shows that $U_{Opt}$ is determined by the offloading size of the followers. 

\subsubsection{Offloading Cost and Heterogeneity of Data Traffic Flows}
In this scenario, we consider two symmetric APs and two traffic flows. 
In particular, the offloading cost of the APs for flow $f$ is homogeneous, i.e., $e^1_1=e^1_2=0.2$ and $e^2_1=e^2_2=e^2$. Meanwhile, the utility function of each flow is assumed to be a linear function, i.e, $u_f(\sum_{i\in \mathcal{R}} \log (1+r^f_i))= w_f \sum_{i\in \mathcal{R}} \log (1+r^f_i)$ where $w_1=1$.
Fig.~\ref{fig:AsymW2e2} show how the PoA is affected by the offloading cost and the heterogeneity of flows. We observe that as $w_2$ increases, corresponding to an increasing heterogeneity of flows, the PoA tends to decrease and approaches 1; on the other hand, as $e^2$ increases from $0.2$ to $0.8$, PoA tends to increase. For example, when $w_2=2$, the APs are more reluctant to offload flow $f=2$ for its larger offloading cost, and accordingly, the two traffic flows are not treated equally. In this case, flow $f=2$ cannot participate in the market pricing to the same extent as its counterpart $f=1$, which leads to an increase of the PoA.
\begin{figure}
\centering
\includegraphics[width=10cm,height=4cm]{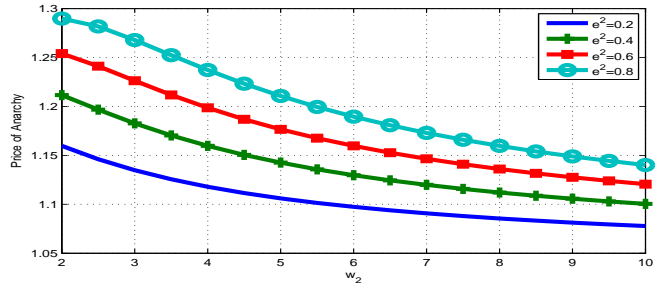}
\caption{The impact of the offloading cost and heterogeneity of traffic flows on PoA. $w_1=1$, $e^1_1=e^1_2=0.2$, and $e^2_1=e^2_2=e^2$.}
\label{fig:AsymW2e2}
\end{figure}

\subsubsection{Number of APs}
We consider two data traffic flows and multiple APs with a linear utility function for each traffic flow, i.e., $u_f(\sum_{i\in \mathcal{R}} \log (1+r^f_i))= w_f \sum_{i\in \mathcal{R}} \log (1+r^f_i)$ where $w_1=2$ and $w_2=3$.
We consider two kinds of APs: homogeneous APs (with the same offloading cost for each traffic flow) and heterogeneous APs (with different offloading costs for different traffic flows).

(a) Homogenous APs: Assume $e^1_i =0.1$ and $e^2_i =0.2$ for any $i \in \mathcal{R}$. From Fig.~\ref{fig:AsymRinc}, we observe that the PoA decreases as the number of the APs increases, and approaches 1 for $R \geq 5$. This can be explained as follows. As each traffic flow has the same offloading cost, the APs equally and fully participate in the pricing process, and as a consequence, the equilibrium utility $U_{NE}$ approaches the social utility $U_{Opt}$ as the number of APs increases.

(b) Heterogenous APs: In this case, we assume $e^1_1 =0.1, e^2_1 =0.2$, and $e^f_{j+1}=e^f_j+0.1$ to generate the offloading cost for each AP $j \in \mathcal{R}$, which reflects the different QoS requirements of the APs. From Fig.~\ref{fig:AsymRinc}, we observe that PoA tends to increase as the number of APs increases, and ultimately converges to a stable value. Because of the adopted generating rule for offloading cost, the offloading cost increases as the number of the APs. Hence, when the number of APs exceeds a certain threshold, the APs with larger offloading cost cannot obtain positive utility by offloading data traffic flow, and thus, do not offload data because of their selfishness. This is the reason for the stability of the PoA when the number of APs exceeds a certain threshold.
\begin{figure}
\centering
\includegraphics[scale=0.5]{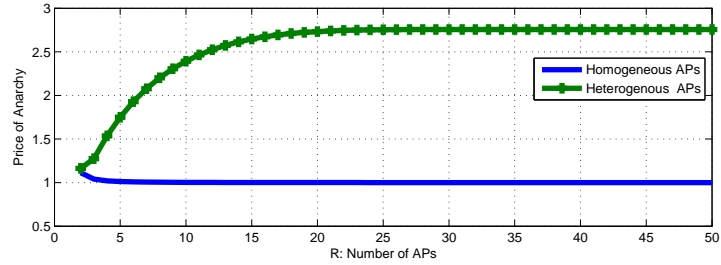}
\caption{The impact of the number of APs on PoA. (Homogeneous) $w_1=2$, $w_2=3$,  $e^1_i =0.1$ and $e^2_i =0.2$ for any $i \in \mathcal{R}$. (Heterogeneous)  $w_1=2$, $w_2=3$, $e^1_1 =0.1$, $e^2_1 =0.2$, and $e^f_{j+1}=e^f_j+0.1$ $j \in \mathcal{R}$.}
\label{fig:AsymRinc}
\end{figure}

\subsubsection{Number of Data Traffic Flows}
We consider two APs and multiple data traffic flows with a linear utility function for each traffic flow, i.e., $u_f(\sum_{i\in \mathcal{R}} \log (1+r^f_i))= w_f \sum_{i\in \mathcal{R}} \log (1+r^f_i)$. Specifically, we consider homogeneous and heterogeneous data traffic flows, respectively.

(a) Homogeneous Data Traffic Flows: We use the rule, $w_1=2$, $w_2=3$, and $w_{f+1}=w_{f}, 2\leq f\leq F$, to generate a linear utility for each flow, and set the offloading cost to $e^1_1 =0.1, e^1_2 =0.2$ and $e^f_1 =0.3, e^f_2 =0.4$ for $2\leq f\leq F$. From Fig.~\ref{fig:AsymFinc}, we observe that the PoA decreases with the number of flows. This can be explained as follows. As the number of homogenous flows increases, a larger number of flows compete in the market, which makes the $U_{NE}$ more efficient and close to the optimum social utility.

(b) Heterogeneous Data Traffic Flows: We use the rule, $w_1=2$, $e^1_1 =0.1$, and $e^1_2 =0.2$, $w_{f+1}=w_f+1$, and $e^{f+1}_j=e^f_j+0.1$ to generate a linear utility for each flow and a offloading cost for each AP, respectively. Fig.~\ref{fig:AsymFinc} shows that the PoA first decreases steeply and then increases slowly as the number of flows increases. In the steep region ($F\leq5$), the effect of the heterogeneous utility functions dominates the effect of the heterogenous offloading costs, which incentivizes the APs to offload data traffic flows, and as a consequence, $U_{NE}$ comes closer to $U_{Opt}$. In the flat region ($F>5$), as the number of flows increases, the advantage of the heterogeneous utility functions decreases while the negative effect of the offloading cost becomes much stronger, i.e., the APs have less incentive to offload data traffic flows, and consequently, the PoA begins to increase slowly.
\begin{figure}
\centering
\includegraphics[scale=0.6]{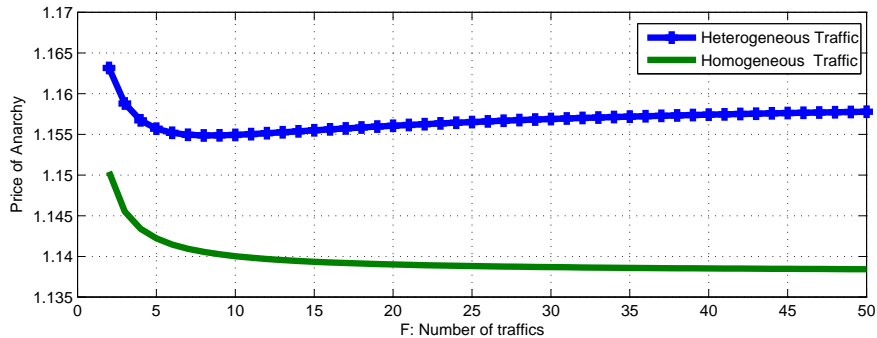}
\caption{The impact of the number of traffics on PoA. (Homogeneous) $w_1=2$, $w_2=3$, and $w_{f+1}=w_{f}, 2\leq f\leq F$. (Heterogeneous) $w_1=2$, $e^1_1 =0.1$, and $e^1_2 =0.2$, $w_{f+1}=w_f+1$, and $e^{f+1}_j=e^f_j+0.1$.}
\label{fig:AsymFinc}
\end{figure}

\section{Conclusions}\label{section:conclusion}

In this paper, we have proposed a pricing framework for cellular networks to offload mobile data traffic with the assistance of WiFi network. Specifically, the proposed framework can be utilized to motivate offloading service providers to participate in mobile data offloading, which is a new paradigm to alleviate cellular network congestion and to improve the level of user satisfaction as well. We have modeled the pricing mechanism as a multi-leader multi-follower Stackelberg game in which the offloading service providers are the followers and the offloading service consumers are the leaders. Technically speaking, we have analyzed the proposed Stackelberg game by distinguishing two different cases based on the offloading capacity of the APs.
For the case where the APs do not have an offloading capacity limit, we have decomposed the followers' game of the multi-leader multi-follower Stackelberg game into a fixed number of followers' games, and proved the existence and uniqueness of the equilibrium, and obtained an efficient algorithm to compute the equilibrium.
For the case with offloading capacity limit, by considering the symmetric strategy profile, we have established some structural results for the equilibrium, and further proved the existence and uniqueness of the equilibrium of the Stackelberg game. Consequently, we presented a distributed algorithm to compute the offloading price for each flow, and proved its convergence to the unique equilibrium. Finally, extensive numerical experiments were provided to demonstrate that the Stackelberg equilibrium is very close to the corresponding social optimum for both considered cases.

There are some future research directions. One direction is the investigation of the case of asymmetric strategy profiles, where each AP may have different offloading cost and offloading capacity. One of the possible approaches is to analyze the followers' game as a non-cooperative game~\cite{Rosen1965ECO}, and we expect that a unique equilibrium exists. Another promising direction for research is the investigation of other allocation rules for reimbursement (rather than the proportional allocation rule considered in this paper) such that the offloading service providers can be motivated to participate in mobile data offloading more actively.


\newpage

\appendices
\section{Proof of Theorem~\ref{th:exist_f_no_lim}}
\label{apdix:exist_f_no_lim}
Based on Proposition~\ref{pr:sg_dec}, to prove $\mathbf{\widetilde{r}}=(\mathbf{\widetilde{r}}^1,\mathbf{\widetilde{r}}^2,\cdots,\mathbf{\widetilde{r}}^{|\mathcal{F}|})$ is an NE of the FG-MLMF-SG, we only need to show that for a given flow $f$ the strategy profile $\mathbf{\widetilde{r}}^f$ is an NE of the $FG(f)$.
%

First, we prove that for $FG(f)$ and any $i \notin \mathcal{S}_f$, $\widetilde{r}^f_i=0$ is the best response strategy given $\widetilde{\mathbf{r}}^f_{-i}$. From Algorithm~\ref{ag:S_set}, we have $e^f_{i} \geq \frac{\sum_{j\in \mathcal{S}_f}e^f_{j}}{|\mathcal{S}_f|-1}$ for any $i\notin\mathcal{S}_f$ at the NE point. Since $i \notin \mathcal{S}_f$, we have $e^f_i \sum_{j\in \mathcal{S}_f\setminus\{i\}} \widetilde{r}^f_j  = e^f_i \sum_{j\in \mathcal{S}_f} \widetilde{r}^f_j=\frac{e^f_i(|\mathcal{S}_f|-1)}{\sum_{j\in \mathcal{S}_f}e^f_j}C^f\geq C^f$, which implies that $\Gamma^f_i(\mathbf{r}^f_{-i})=0$ according to~\eqref{node_br}.

Next, we prove that for $FG(f)$ and any $\sigma_i \in \mathcal{S}_f$, $\widetilde{r}^f_{\sigma_i}$ is the best response strategy given $\widetilde{r}^f_{-{\sigma_i}}$. Since $e^f_{\sigma_i} < \frac{\sum^i_{j=1}e^f_{\sigma_j}}{i-1}$ (line 4 of Algorithm~\ref{ag:S_set}), we have
\begin{align*}
    (|\mathcal{S}_f|-1)e^f_{\sigma_i} =(i-1)e^f_{\sigma_i}  + (|\mathcal{S}_f|-i)e^f_{\sigma_i} < \sum^i_{j=1}e^f_{\sigma_j} + \sum^{|\mathcal{S}_f|}_{j=i+1}e^f_{\sigma_j}=\sum_{j\in \mathcal{S}_f}e^f_j,
\end{align*}
where $e^f_{\sigma_i}\leq e^f_{\sigma_j}$ for $i+1\leq j \leq |\mathcal{S}_f|$.

Furthermore,
\begin{align*}
  e^f_{\sigma_i}  \sum_{j\in \mathcal{R}\setminus\{\sigma_i\}} \widetilde{r}^f_j = e^f_{\sigma_i}  \sum_{j\in \mathcal{S}_f\setminus\{\sigma_i\}} \widetilde{r}^f_j = \frac{(|\mathcal{S}_f|-1)^2( e^f_{\sigma_i} )^2}{[\sum_{j\in \mathcal{S}_f}e^f_j]^2} C^f<  C^f.
\end{align*}
According to~\eqref{node_br}, we have
\begin{align*}
    \Gamma^f_i(\mathbf{r}^f_{-{\sigma_i}})
    =\sqrt{\frac{C^f \sum_{j\in \mathcal{R}\setminus\{\sigma_i\}} \widetilde{r}^f_j}{e^f_{\sigma_i}}}-\sum_{j\in \mathcal{R}\setminus\{\sigma_i\}} \widetilde{r}^f_j
    = \frac{(|\mathcal{S}_f|-1)C^f}{\sum_{j\in \mathcal{S}_f} e^f_j} -   \frac{(|\mathcal{S}_f|-1)^2 C^f e^f_{\sigma_i}}{[\sum_{j\in \mathcal{S}_f}e^f_j]^2} = \widetilde{r}^f_{\sigma_i}.
\end{align*}

Therefore, given $f$, $\widetilde{\mathbf{r}}^f$ is an NE of $FG(f)$, and consequently, $\widetilde{\mathbf{r}}$ is an NE of the FG-MLMF-SG according to Proposition~\ref{pr:sg_dec}.

\section{Proof of Theorem~\ref{th:uniqu_f_no_lim}}
\label{apdix:uniqu_f_no_lim}
Based on Proposition~\ref{pr:sg_dec}, to prove the uniqueness of the NE of FG-MLMF-SG, it is sufficient to show the uniqueness of the NE of $FG(f)$ for any $f \in \mathcal{F}$.  Let $\hat{\mathcal{S}}_f=\{i\in \mathcal{R}: \hat{r}^f_i > 0\}$.

1) $\hat{r}^f_i=\frac{(|\mathcal{\hat{S}}_f|-1)C^f}{\sum_{j\in \mathcal{\hat{S}}_f}e^f_j} \Big(1-\frac{(|\mathcal{\hat{S}}_f|-1)e^f_i}{\sum_{j\in \mathcal{\hat{S}}_f}e^f_j}\Big)$ if $i\in \hat{\mathcal{S}}_f$; otherwise $\hat{r}^f_i=0$. 
      Considering that $\sum_{j\in \mathcal{R}}\hat{r}^f_j = \sum_{j\in \mathcal{\hat{S}}_f}\hat{r}^f_j$, we obtain from $\frac{\partial V^f_i}{\partial r^f_i}=0$,
\begin{align}
\label{eq:temp1}
    \frac{-C^f \hat{r}^f_i}{[\sum_{j\in \mathcal{\hat{S}}_f}\hat{r}^f_j]^2} +
    \frac{ C^f }{\sum_{j\in \mathcal{\hat{S}}_f}\hat{r}^f_j} -e^f_i = 0,~~~i\in \hat{\mathcal{S}}_f.
\end{align}
Furthermore, we have $|\mathcal{\hat{S}}_f| C^f -C^f = \Big(\sum_{j\in \mathcal{\hat{S}}_f}\hat{r}^f_j\Big) \Big( \sum_{j\in \mathcal{\hat{S}}_f}e^f_j\Big)$ by summing up the left hand side (LHS) of~\eqref{eq:temp1} over all nodes in $\hat{\mathcal{S}}_f$. Therefore, we have
\begin{align}
\label{eq:temp2}
    \sum_{j\in \mathcal{\hat{S}}_f} \hat{r}^f_j =  \frac{(|\mathcal{\hat{S}}_f|-1)C^f}{\sum_{j\in \mathcal{\hat{S}}_f}e^f_j}.
\end{align}
Feeding~\eqref{eq:temp2} back into~\eqref{eq:temp1} and letting $\hat{r}^f_j=0$ for any $j\in \mathcal{R}\setminus \hat{\mathcal{S}}$, we obtain
\begin{align}
\label{eq:temp3}
    \hat{r}^f_i= \frac{(|\mathcal{\hat{S}}_f|-1)C^f}{\sum_{j\in \mathcal{\hat{S}}_f}e^f_j} \Big(1-\frac{(|\mathcal{\hat{S}}_f|-1)e^f_i}{\sum_{j\in \mathcal{\hat{S}}_f}e^f_j}\Big)
\end{align}
for every $i \in \hat{\mathcal{S}}$. This proves 1).

%

2) $i\geq2 $. Assume $i=0$, then any AP, e.g., AP $j$, can increase its utility from 0 to $\frac{C^f}{2}$ by unilaterally changing its offloading data size from 0 to $\frac{C^f}{2e^f_j}$, contradicting the NE assumption and demonstrating $i\geq1$. Now, assume that $i=1$. This means $\hat{r}^f_{\sigma_1}>0$ and $\hat{r}^f_{\sigma_k}=0$ for all $k\in \mathcal{R}\setminus\{1\}$. According to~\eqref{ob_node}, the current utility of AP $\sigma_1$ for flow $f$ is $C^f-\hat{r}^f_{\sigma_1} e^f_{\sigma_1}$. Hence, AP $\sigma_1$ can increase its utility by unilaterally changing the amount of data it offloads, contradicting the NE assumption. Therefore $i\geq2$.

On the other hand, considering the definition of $\mathcal{\hat{S}}_f$, we know that $\hat{r}^f_i >0$ for every $i \in \mathcal{\hat{S}}_f$. From~\eqref{eq:temp3}, $\hat{r}^f_i >0$ implies that $ \frac{(|\hat{\mathcal{S}}_f|-1)e^f_i}{\sum_{j\in \mathcal{\hat{S}}_f}e^f_j}<1$. Therefore, we have
$ e^f_i < \frac{\sum_{j\in \mathcal{\hat{S}}_f}e^f_j}{|\hat{\mathcal{S}}_f|-1}$ for any $i\in\mathcal{\hat{S}}_f$,
which implies that $\max_{i \in \hat{\mathcal{S}}_f} \{ e^f_i \} < \frac{\sum_{j\in \mathcal{\hat{S}}_f}e^f_j}{|\hat{\mathcal{S}}_f|-1}$ for any $i\in\hat{\mathcal{S}}_f$.
%
That is, when APs are ordered such that $e^f_{\sigma_1} \leq e^f_{\sigma_2} \leq \cdots \leq e^f_{\sigma_R}$, $\mathcal{\hat{S}}_f$ is always composed of the APs with the least offloading cost. Assume $\mathcal{\hat{S}}_f=\{\sigma_1,\cdots,\sigma_k\}$ where $e^f_{\sigma_{k+1}} < \frac{\sum^{k}_{j=1} e^f_{\sigma_j} }{k-1}$, we have $\sigma_{k+1}\notin \mathcal{\hat{S}}_f$, and further $\hat{r}^f_{\sigma_{k+1}}=0$. Thus,
\begin{align*}
    \frac{V^f_{\sigma_{k+1}}}{\partial r^f_{\sigma_{k+1}}}\Big|_{r^f_{\sigma_{k+1}}=\hat{r}^f_{\sigma_{k+1}}}=\frac{C^f}{\sum_{j\in \mathcal{\hat{S}}_f} \hat{r}^f_j}-e^f_{\sigma_{k+1}} = \frac{\sum_{j\in \mathcal{\hat{S}}_f}e^f_j}{k-1}-e^f_{\sigma_{k+1}} > 0,
\end{align*}
which implies that AP $\sigma_{k+1}$ can increase its utility by unilaterally increasing its offloading data size, contradicting the NE assumption. Hence, $\mathcal{\hat{S}}_f=\{\sigma_1,\cdots,\sigma_i\}$ and $e^f_{\sigma_{i+1}} \geq \frac{\sum^{i}_{j=1} e^f_{\sigma_j} }{i-1}$.

Statement 1), which gives the optimal amount of offloaded data, and Statement 2), which implies that $\mathcal{\hat{S}}_f$ has a threshold structure concerning the offloading cost, show the uniqueness of the NE of $FG(f)$, and furthermore, the uniqueness of the NE of FG-MLMF-SG is obtained based on Proposition~\ref{pr:sg_dec}.

\section{Proof of Theorem~\ref{th:sym_interior_equi}}
\label{apdix:sym_interior_equi}

We prove this theorem by contradiction. Suppose that in a strictly interior equilibrium $\mathbf{r}$, there exists a flow $f_0$ and APs $i,~j$ such that $r^{f_0}_i < r^{f_0}_j$. Consider the first order derivative
\begin{align}
\label{dvi_dr}
    \frac{\partial V_i}{\partial r^f_i} = \frac{\sum_{k \in \mathcal{R}} r^f_k -r^f_i }{[\sum_{k \in \mathcal{R}}r^f_k]^2}C^f-e^f .
\end{align}

It follows that, for flow $f_0$, we have
\begin{align}
\label{dvi_dr_gl}
    r^{f_0}_i < r^{f_0}_j \Longrightarrow \frac{\sum_{k \in \mathcal{R}} r^{f_0}_k -r^{f_0}_i }{[\sum_{k \in \mathcal{R}} r^{f_0}_k]^2}C^{f_0} - e^{f_0}= \frac{\partial V_i}{\partial r^{f_0}_i} > \frac{\partial V_j}{\partial r^{f_0}_j} =\frac{\sum_{k \in \mathcal{R}} r^{f_0}_k -r^{f_0}_j }{[\sum_{k \in \mathcal{R}} r^{f_0}_k]^2}C^{f_0}- e^{f_0}.
\end{align}

On the other hand, since $\mathbf{r}$ is strictly interior, it follows from KKT condition (A3) that $\nu^f_i=0, \forall i\in \mathcal{R}, f\in \mathcal{F}$. Thus, combing (A1) and~\eqref{dvi_dr_gl}, we have
\begin{align*}
    \lambda_i = \frac{\partial V_i}{\partial r^{f_0}_i} > \frac{\partial V_j}{\partial r^{f_0}_j} = \lambda_j,
\end{align*}
which therefore leads to $r^f_i<r^f_j$ for any $f\in \mathcal{F}$ (not just $f_0$), and, therefore, $\sum_{f} r^f_i<\sum_{f}r^f_j\leq B$ which implies $\lambda_i=0$ according to (A2). Obviously, this contradicts $\lambda_i> \lambda_j\geq0$. This completes the proof of the theorem.

\section{Proof of Theorem~\ref{th:sym_exist}}
\label{apdix:sym_exist}

Note that $g^f(\rho^f)= \frac{\partial V_i}{\partial  r^f_i }\Big|_{r^f_j = \rho^f}$ for the symmetric strategy profile $\{\rho^f, f \in \mathcal{F}\}$. Thus, conditions (B1)-(B3) coincide with the KKT conditions (A1)-(A3) in this case. Accordingly, the set $\{\rho^f, f \in \mathcal{F}\}$ corresponds to a symmetrical NE i.i.f it satisfies conditions (B1)-(B3).

It remains to be shown that there exists a unique combination of $\{\rho^f\}$, $\lambda$, and $\{\nu^f\}$ satisfying conditions (B1)-(B3). To that end, we define the function $W(\mathbf{x})$, where $\mathbf{x}=\big(x^1,x^2,\cdots,x^{|\mathcal{F}|}\big)$, as: $W(\mathbf{x})\triangleq \sum_{f \in \mathcal{F}} \int^{x^f}_{0}g^f(\xi)d \xi$. Consider the following optimization problem:
\begin{align*}
 \max_{\mathbf{x}} W(\mathbf{x})~~~  s.t.~~~ \sum_{f\in \mathcal{F}} x^f \leq B \text{ and } x^f \geq 0,~~\forall f\in \mathcal{F}.
\end{align*}

Since $ W(\mathbf{x})$ is a sum of integrals of decreasing functions, it is continuously differential and concave, and therefore, the above constrained optimization problem over a compact region must have a unique solution, which is denoted by $\{\rho^f, f \in \mathcal{F}\}$. This solution must satisfy the KKT conditions for problem~\eqref{op_node}, which are precisely the conditions listed in (B1)-(B3).

\section{Proof of Lemma~\ref{lemma:prop}}
\label{apdix:prop}

(1) The continuity of $\lambda$ with respect to $C^f$ is immediate from conditions (B1)-(B3) and the continuity of $g^f$. To establish the monotonicity, suppose to the contrary that $\lambda_a=\Lambda(C^{f_0}_a)>\Lambda(C^{f_0}_b)=\lambda_b$ for some $C^{f_0}_a<C^{f_0}_b$. Suppose that $\{\rho^f_a,f\in \mathcal{F}\}$ and $\{\rho^f_b,f\in \mathcal{F}\}$ correspond to the equilibria at $C^f_a$ and $C^f_b$, respectively. Then, $\lambda_a>\lambda_b \geq 0$ implies that $\lambda_a=C^f h^f(\rho^f_a)-e^f >C^f h^f(\rho^f_b)-e^f =\lambda_b$ for all $f\in \mathcal{F}\setminus\{f_0\}$.
Since $h^f(\rho)$ is monotonically decreasing in $\rho$, we have $\rho^f_a<\rho^f_b$ for all $f\in \mathcal{F}\setminus\{f_0\}$. Thus, $\rho^{f_0}_a =B-\sum_{f\in \mathcal{F}\setminus\{f_0\}}\rho^{f}_a >B-\sum_{f\in \mathcal{F}\setminus\{f_0\}}\rho^{f}_b \geq \rho^{f_0}_b$, which implies, according to the decreasing monotonicity of $h^f(\rho)$ and $C^{f_0}_a<C^{f_0}_b$, that $ C^{f_0}_a h^{f_0}(\rho^{f_0}_a)-e^{f_0}=\lambda_a < \lambda_b =C^{f_0}_b h^{f_0}(\rho^{f_0}_b)-e^{f_0}$, which  obviously contradicts $\lambda_a>\lambda_b$.
Thus, we conclude that $\lambda=\Lambda(C^f)$ is continuous and non-decreasing in $C^f$.


(2) The continuity of $\Psi(C^f)$ can be proved in a similar manner as that of $\lambda=\Lambda(C^f)$. We now prove the monotonicity of $\Psi(C^f)$. Assume $0< C^{f_0}_a < C^{f_0}_b $, we consider the following two cases.

Case I. $\Lambda(C^{f_0}_b)=0$. Because of the monotonicity of $\Lambda(C^f)$, we have $\Lambda(C^f_a)=0$ as well. In the equilibria corresponding to $C^f_a$ and $C^f_b$, respectively, we have $C^f_a h^f(\Psi(C^f_a))-e^f=C^f_b h^f(\Psi(C^f_b))-e^f=0$, which implies $\Psi(C^f_a)<\Psi(C^f_b)$ because of the monotonicity of $h^f(\cdot)$.

Case II. $\Lambda(C^{f_0}_b)>0$. We prove the property by contradiction. Suppose to the contrary that $\rho^{f_0}_a=\Psi(C^{f_0}_a)<\Psi(C^{f_0}_b)=f^{f_0}_b$ for some $C^{f_0}_a>C^{f_0}_b$. Suppose that $\{\rho^f_a,f\in \mathcal{F}\}$ and $\{\rho^f_b,f\in \mathcal{F}\}$ correspond to the equilibriums at $C^f_a$ and $C^f_b$, respectively. According to the monotonicity of $\lambda=\Lambda(C^f)$, we have $\lambda_a>\lambda_b \geq 0$, which implies that $\lambda_a=C^f h^f(\rho^f_a)-e^f >C^f h^f(\rho^f_b)-e^f =\lambda_b$ for all $f\in \mathcal{F}\setminus\{f_0\}$.
Since $h^f(\rho)$ is monotonically decreasing in $\rho$, we have $\rho^f_a<\rho^f_b$ for all $f\in \mathcal{F}\setminus\{f_0\}$. Thus, $\rho^{f_0}_a =B-\sum_{f\in \mathcal{F}\setminus\{f_0\}}\rho^{f}_a >B-\sum_{f\in \mathcal{F}\setminus\{f_0\}}\rho^{f}_b \geq \rho^{f_0}_b$, which contradicts $\rho^{f_0}_a<\rho^{f_0}_b$ .
Thus, we conclude that $\rho^f=\Psi(C^f)$ is continuous, and strictly increasing in $C^f \in (0,\infty)$.


(3) From the continuity of $\Psi(C^f)$, it follows that, for any $0 < \rho^f <B$, there exist unique $C^f=\Psi^{-1}(\rho^f)$, $\lambda$, and $\{\rho^{f'},f'\ne f\}$ that construct a symmetric equilibrium together with $\rho^f$. Therefore, these quantities can be regarded as functions of $\rho^f$, and we consider their derivatives with respect to $\rho^f$.

We rewrite condition (B1) for flow $f$ as:
\begin{align}
\label{B1_f}
    C^fh^f(\rho^f)-e^f = \lambda
\end{align}
and, for any flow $f' \in \mathcal{F} \setminus\{ f\}$,
\begin{align}
\label{B1_f'}
    C^{f'} h^{f'}(\rho^{f'})-e^{f'} = \lambda.
\end{align}

Taking the derivative of both sides in~\eqref{B1_f} and~\eqref{B1_f'} with respect to $\rho^f$, respectively, we obtain
\begin{align}
\label{B1_d_f}
   & \frac{d C^f}{d \rho^f}h^f(\rho^f)-C^f \frac{R-1}{R^2}\frac{1}{(\rho^f)^2} = \frac{d \lambda}{d \rho^f} \\
\label{B1_d_f'}
   &  -C^{f'} \frac{R-1}{R^2}\frac{1}{(\rho^{f'})^2} \frac{d \rho^{f'}}{d \rho^f}= \frac{d \lambda}{d \rho^f}
\end{align}

Now, we distinguish two subregions of $\lambda$. If $\rho^f + \sum_{f'\ne f}\rho^{f'} <B$, according to (B2), $\lambda=0$ in the vicinity of $\rho^f$. Thus, $ \frac{d \lambda}{d \rho^f}=0$ and $g^f(\rho^f)=C^f h^f(\rho^f)-e^f=0$. Substituting these results into~\eqref{B1_d_f}, we thus obtain
\begin{align}
\label{C_d_p_eq1}
    \frac{d C^f}{d \rho^f} = \frac{C^f}{\rho^f} =\frac{ e^f R^2}{R-1},
\end{align}
which is non-decreasing in $\rho^f$ with $e^f \geq 0$.

Otherwise,  for $\rho^f + \sum_{f'\ne f}\rho^{f'}=B$, taking the derivative of both sides with respect to $\rho^f$, then $\sum_{f'} \frac{d \rho^{f'}}{d \rho^f}=-1$ in the vicinity of $\rho^f$, combined with~\eqref{B1_d_f'}, which implies
\begin{align*}
   \frac{d \lambda}{d \rho^f}= \sum_{f'\ne f}  C^{f'} \frac{R-1}{R^2}\frac{1}{(\rho^{f'})^2},
\end{align*}
which can be fed back into~\eqref{B1_d_f} to yield
\begin{align}
\label{C_d_p_l1}
\frac{d C^f}{d \rho^f}
=& C^f \frac{R-1}{h^f(\rho^f) R^2}\frac{1}{(\rho^f)^2} + \frac{1}{h^f(\rho^f)} \sum_{f'\ne f}  C^{f'} \frac{R-1}{R^2}\frac{1}{(\rho^{f'})^2} \nonumber \\
=& \frac{(\lambda+e^f)}{[h^f(\rho^f)]^2} \frac{R-1}{ R^2}\frac{1}{(\rho^f)^2} + \frac{1}{h^f(\rho^f)} \sum_{f'\ne f}  C^{f'} \frac{R-1}{R^2}\frac{1}{(\rho^{f'})^2} \nonumber \\
=& (\lambda+e^f)\frac{R^2} {R-1}+ \rho^f \sum_{f'\ne f} \frac{C^{f'} }{(\rho^{f'})^2},
\end{align}
which is increasing in $\rho^f$, since is increasing in $\rho^f$ (Lemma~\ref{lemma:prop}) and $\rho^{f'}$ is decreasing in $\rho^f$.

Combing our findings that both~\eqref{C_d_p_eq1} and~\eqref{C_d_p_l1} are non-decreasing in $\rho^f$, and noticing that the jump in $\frac{d C^f}{d \rho^f}$ at the boundary between the two subregions (namely, the difference between~\eqref{C_d_p_l1} at $\lambda=0$ and~\eqref{C_d_p_eq1}) is positive, we conclude that $C^f=\Psi^{-1}(\rho^f)$ is convex, and, therefore, $\Psi(C^f)$ is concave in the entire range $C^f>0$.

\section{Proof of Theorem~\ref{th:ex_uq}}
\label{apdix:ex_uq}

\textbf{Existence}: Define the mapping $\Phi(\mathbf{C})=\{ \Upsilon^f(\mathbf{C}^{-f}), f \in \mathcal{F}\}$ as the collection of best-response functions to the respective strategy vectors of other flows. Since each component of $\Phi(\mathbf{C})$ is continuous and bounded (Lemma~\ref{lemma:br_bound}), the entire mapping is continuous and bounded. Therefore, it has a fixed point, which is an equilibrium of the leaders' game. This establishes the existence of the SNE.

\textbf{Uniqueness}: The uniqueness of the fixed point requires that, in an equilibrium, $\frac{\partial U_f}{\partial C^f}=0$ must be satisfied for any $f\in \mathcal{F}$. We distinguish two cases in the following. 

If $\rho^f + \sum_{f'\ne f}\rho^{f'} <B$, we have $\frac{d C^f}{d \rho^f} = \frac{C^f}{\rho^f} =\frac{e^f R^2}{R-1} $. Thus,
\begin{align*}
    \frac{\partial U_f}{\partial C^f}
    =& u'_f\Big(R \log (1+\rho^f)\Big) \frac{R}{1+\rho^f}\frac{\partial \rho_f}{\partial C^f} -1 \\
    =& u'_f\Big(R \log (1+\rho^f)\Big) \frac{R-1 }{ R}\frac{1}{ e^f (1+\rho^f) } -1 = 0,
\end{align*}
that is to say,
\begin{align}
\label{contrator_1}
     e^f (1+\rho^f) =  u'_f\Big(R \log (1+\rho^f)\Big) \frac{R-1 }{R}.
\end{align}

Since the LHS of~\eqref{contrator_1} is increasing in $\rho^f$ while the RHS of~\eqref{contrator_1} is decreasing in $\rho^f$, we can conclude that~\eqref{contrator_1} has one solution at most.

On the other hand, if $\rho^f + \sum_{f'\ne f}\rho^{f'}=B$, we have
$\frac{d C^f}{d \rho^f}= (\lambda+e^f)\frac{R^2} {R-1}+ \rho^f \sum_{f'\ne f} \frac{C^{f'} }{(\rho^{f'})^2}$. Thus,
\begin{align*}
    \frac{\partial U_f}{\partial C^f}
    =& u'_f\Big(R \log (1+\rho^f)\Big) \frac{R}{1+\rho^f}\frac{\partial \rho_f}{\partial C^f} -1=0,
\end{align*}
that is to say,
\begin{align}
\label{contrator_2}
(\lambda+e^f)\frac{R^2} {R-1}+ \rho^f \sum_{f'\ne f} \frac{C^{f'} }{(\rho^{f'})^2}  =  u'_f\Big(R \log (1+\rho^f)\Big) \frac{R}{1+\rho^f}.
\end{align}

Similarly, it is easily to see that the LHS of~\eqref{contrator_2} is increasing in $\rho^f$ while the RHS of~\eqref{contrator_2} is decreasing in $\rho^f$. Thus, we can conclude that~\eqref{contrator_2} has one solution at most.

\section{Proof of Lemma~\ref{lemma:br_mono}}
\label{apdix:br_mono}

Given flow $f$, we consider two price vectors $(C^f_a,\mathbf{C}^{-f}_a)$ and $(C^f_b,\mathbf{C}^{-f}_b)$ such that $C^f_a=\Upsilon(\mathbf{C}^{-f}_a)$ and $C^f_b=\Upsilon(\mathbf{C}^{-f}_b)$, and the only difference between $\mathbf{C}^{-f}_a$ and $\mathbf{C}^{-f}_b$ is that one component $C^{f'}, f'\ne f$, is changed between $C^{f'}_a$ and $C^{f'}_b$, where $C^{f'}_a <C^{f'}_b$. The lemma then states that $C^f_a \leq C^f_b$.

We prove the lemma by contradiction. Suppose that $C^f_a > C^f_b$ when $C^{f'}_a <C^{f'}_b$.
If $C^f_b=u_f\Big(R \log (1+B)\Big)$, then the lemma holds trivially since $C^f_b$ is already an upper bound for possible values of $C^f$. Therefore, we assume $C^f_b <u_f\Big(R \log (1+B)\Big)$, i.e., $C^f_b$ is the solution of the equation $\frac{\partial U_f}{\partial C^f}=0$ at $(C^f_b,\mathbf{C}^{-f}_b)$, and further define $\lambda_b$ and $\rho^f_b$ as the respective values of the corresponding followers' equilibrium. Similarly, we assume that $C^f_a$ is the solution of $\frac{\partial U_f}{\partial C^f}=0$ at $(C^f_a,\mathbf{C}^{-f}_a)$, and define $\lambda_a$ and $\rho^f_a$ as the respective values of the corresponding followers' equilibrium.

Next, consider the followers' equilibrium for the price vector $(C^f_b,\mathbf{C}^{-f}_a)$, and denote the respective values by $\lambda_{ba}$ and $\rho^f_{ba}$. For flow $f'$, because of the increasing monotonicity of $\Lambda(C^{f'})$ from Lemma~\ref{lemma:prop}, we conclude $\lambda_{ba}<\lambda_b$ since $C^{f'}_a < C^{f'}_b$. Consequently, for flow $f$, according to condition (B1), we have $C^f_b h^f(\rho^f_{ba})-e^f=\lambda_{ba}<\lambda_b=C^f_b h^f(\rho^f_b)-e^f$, which implies $\rho^f_{ba} > \rho^f_b$ as $h^f(\rho)$ is monotonically decreasing.


Considering the two equilibria for the price vectors $(C^f_a, \mathbf{C}^{-f}_a)$ and $(C^f_b, \mathbf{C}^{-f}_a)$, respectively, we have $\rho^f_a >\rho^f_{ba}$ since $C^f_a > C^f_b$ because of the monotonicity of $\Psi(C^f)$.

Since each of the terms on the RHS of~\eqref{dUf_dC} is decreasing in $\rho^f$ and $U_f$ is concave in $C^f$, $\frac{\partial U_f}{\partial C^f}$ is decreasing in $\rho^f$ and $C^f$, respectively. Then,
for $ \rho^f_b < \rho^f_{ba} <\rho^f_a$ and $C^f_b<C^f_a$, we have
\begin{align*}
   0= \frac{\partial U_f}{\partial C^f} \Big|_{\rho^f_b, C^f_b} >  \frac{\partial U_f}{\partial C^f} \Big|_{\rho^f_{ab}, C^f_b} >   \frac{\partial U_f}{\partial C^f} \Big|_{\rho^f_{a}, C^f_b, } >  \frac{\partial U_f}{\partial C^f} \Big|_{\rho^f_{a}, C^f_a}=0,
\end{align*}
which implies a contradiction. This completes the proof of the lemma.

\section{Proof of Theorem~\ref{th:conv}}
\label{apdix:conv}

First, consider an arbitrary sequence of update steps commencing from an initial vector $\mathbf{C}(0)=(\delta,\delta,\cdots,\delta)$ where $\delta\rightarrow 0^+$, and denote by $\mathbf{\underline{C}}(n)$ the resulting sequence of flow price vector after $n$ updates. Obviously, for any flow $f$, the first time the flow updates its strategy will be a non-decreasing update. In light of Lemma~\ref{lemma:br_mono}, it follows by induction that all updates must be non-decreasing, i.e., $\mathbf{\underline{C}}(n)$ is a non-decreasing sequence. Since $\mathbf{\underline{C}}(n)$ is bounded as well (Lemma~\ref{lemma:br_bound}), it follows that it must converge to a limit. Due to the continuity of the best response function $\Phi(\mathbf{C}^f)$, this limit must be its (unique) fixed point $\mathbf{C}^*$.

In a similar manner, consider a sequence of best-response updates $\mathbf{\overline{C}}(n)$ from an initial vector $\mathbf{C}(0)=\{\eta_1,\cdots,\eta_F\}$ where $\eta_f=u_f\Big(R \log (1+B)\Big)$ (i.e., the upper bounds of the respective flows' best responses). By the same token, Lemma~\ref{lemma:br_mono} implies that all the updates in the sequence must be non-increasing, and the sequence must therefore converge to $\mathbf{C}^*$.

Finally, consider an arbitrary initial vector of flow prices commencing from an arbitrary initial vector of flow prices $\mathbf{C}(0)$. Without loss of generality, assume that all the prices are within the bounds set by Lemma~\ref{lemma:br_bound} (otherwise, consider instead the sequence only after every flow has had at least one opportunity to update its strategy). Then, it follows that $\mathbf{\underline{C}}(n) \leq \mathbf{C}(n) \leq \mathbf{\overline{C}}(n)$ provided that for every $n$ the update step is performed by the same flow in all three sequences. Since, as established above, $\mathbf{\underline{C}}(n) $ and $\mathbf{\overline{C}}(n)$ converge to $\mathbf{C}^*$, it follows that the same is true for $\mathbf{C}(n)$ as well.

\end{document}